% RECOMMENDED %%%%%%%%%%%%%%%%%%%%%%%%%%%%%%%%%%%%%%%%%%%%%%%%%%%
\documentclass[graybox]{svmult}

% choose options for [] as required from the list
% in the Reference Guide

\usepackage{mathptmx}       % selects Times Roman as basic font
\usepackage{helvet}         % selects Helvetica as sans-serif font
\usepackage{courier}        % selects Courier as typewriter font
\usepackage{type1cm}        % activate if the above 3 fonts are
\usepackage{amsmath}     % not available on your system
\usepackage{amssymb}   

\usepackage{makeidx}         % allows index generation
\usepackage{graphicx}        % standard LaTeX graphics tool
                             % when including figure files
\usepackage{multicol}        % used for the two-column index
\usepackage[bottom]{footmisc}% places footnotes at page bottom

\makeindex             % used for the subject index
                       % please use the style svind.ist with
                       % your makeindex program

\usepackage{amsmath}
\usepackage{amssymb}
\usepackage{amsbsy}
\usepackage{IEEEtrantools}
\AtBeginDocument{}
\usepackage{tikz}
\usepackage{pgfplots}
\usepackage[linesnumbered,ruled,vlined,titlenumbered]{algorithm2e}

\usetikzlibrary{decorations,patterns}

\makeatletter
\newcommand{\removelatexerror}{\let\@latex@error\@gobble}
\makeatother

\SetKwRepeat{Do}{do}{while}
\newcommand{\printalgoIEEE}[1]
{\begin{center}
\vspace{1ex}
\scalebox{0.97}{
\removelatexerror
\begin{tabular}{p{\textwidth}}
\begin{algorithm}[H]
% \begin{small}
 #1
% \end{small}
\end{algorithm}
\end{tabular}
}
%\vspace{1ex}
\end{center}
}

\newcommand{\Acode}{\mathcal{A}}
\newcommand{\Bcode}{\mathcal{B}}
\newcommand{\Ccode}{\mathcal{C}}
\newcommand{\Pcode}{\mathcal{P}}
\newcommand{\COC}{\mathcal{C}_\mathrm{OC}}
\newcommand{\CGC}{\mathcal{C}_\mathrm{GC}}
\newcommand{\CGCdual}{\CGC^{\perp}}
\newcommand{\Acodei}[1]{\mathcal{A}^{(#1)}}

\newcommand{\isdp}{\delta}

\newcommand{\Bset}{B}

\newcommand{\F}{\mathbb{F}}
\newcommand{\Fq}{\mathbb{F}_q}
\newcommand{\Fqm}[1]{\mathbb{F}_{q^{#1}}}
\newcommand{\Mqm}[1]{\mathcal{M}_{q^{#1}}}

\newcommand{\nB}{{n_B}}
\newcommand{\kB}{{k_B}}
\newcommand{\dB}{{d_B}}
\newcommand{\tB}{{t_B}}
\newcommand{\nA}{{n_A}}
\newcommand{\kA}{{k_A}}
\newcommand{\dA}{{d_A}}

\newcommand{\nOC}{{n_\mathrm{OC}}}
\newcommand{\kOC}{{k_\mathrm{OC}}}
\newcommand{\dOC}{{d_\mathrm{OC}}}
\newcommand{\nGC}{{n_\mathrm{GC}}}
\newcommand{\kGC}{{k_\mathrm{GC}}}
\newcommand{\dGC}{{d_\mathrm{GC}}}

\newcommand{\ki}[1]{{k_{#1}}}
\newcommand{\kAi}[1]{{k_A^{(#1)}}}

\newcommand{\dAi}[1]{{d_A^{(#1)}}}

\newcommand{\imax}{\ell}
\newcommand{\nblocks}{\tau}

\newcommand{\NN}{\mathbb{N}}

\newcommand{\ext}{\mathrm{ext}}

\renewcommand{\a}{\vec{a}}
\renewcommand{\b}{\vec{b}}
\renewcommand{\c}{\vec{c}}
\newcommand{\e}{\vec{e}}
\renewcommand{\r}{\vec{r}}
\newcommand{\m}{\vec{m}}
\newcommand{\rhat}{\vec{\hat{r}}}
\newcommand{\mhat}{\vec{\hat{m}}}

\DeclareMathOperator{\supp}{supp}

\newcommand{\dual}{^{\perp}}
\newcommand{\boundDdual}{\min(\dAi{1}\dual,\dots,\dAi{\imax}\dual, 2 \cdot \dB\dual)}

\newcommand{\vr}[1]{\mathrm{vr}(#1)}
\newcommand{\mr}[1]{\mathrm{mr}(#1)}

\newcommand{\mrtilde}[1]{\mathrm{mr}_\mathrm{new}(#1)}
\newcommand{\Mmatrix}{\vec{M}}
\newcommand{\Pmatrix}{\vec{P}}

\newcommand{\Gmatrix}{\vec{G}}
\newcommand{\Smatrix}{\vec{S}}
\newcommand{\Gpub}{\vec{\tilde{G}}}
\newcommand{\Gsec}{\vec{G}}
\newcommand{\Zmatrix}{\vec{0}}

\newcommand{\GBi}[1]{\vec{G}_{\sigma_i(\Bcode)}}

\newcommand{\ncorrect}{n_{\mathrm{c}}}
\newcommand{\nwrong}{n_{\mathrm{w}}}
\newcommand{\nfail}{n_{\mathrm{f}}}

\newcommand{\wtH}{\mathrm{wt}_\mathrm{H}}
\newcommand{\dH}{\mathrm{d}_\mathrm{H}}

\newcommand{\Vspace}{\mathcal{V}}

\definecolor{legendcolor}{rgb}{0,0,1}
\definecolor{darkgreen}{rgb}{0,0.7,0}
\definecolor{darkred}{rgb}{0.7,0,0}

%%%%%%%%%%%%%%%%%%%%%%%%%%%%%%%%%%%%%%%%%%%%%%%%%%%%%%%%%%%%%%%%%%%%%%%%%%%%%%%%%%%%%%%%%

\begin{document}

\title*{Code-Based Cryptosystems Using Generalized Concatenated Codes}
\author{Sven Puchinger, Sven M\"{u}elich, Karim Ishak, Martin Bossert}

\institute{
Sven Puchinger \and Sven M\"{u}elich \and Karim Ishak \and Martin Bossert \at Institute of Communications Engineering, Ulm University, Germany \\ \email{sven.puchinger@uni-ulm.de, sven.mueelich@uni-ulm.de, karim.ishak@uni-ulm.de, martin.bossert@uni-ulm.de}
}

\maketitle

\abstract*{abstract1...}

\abstract{
The security of public-key cryptosystems is mostly based on number theoretic problems like factorization and the discrete logarithm. 
There exists an algorithm which solves these problems in polynomial time using a quantum computer. 
Hence, these cryptosystems will be broken as soon as quantum computers emerge. 
Code-based cryptography is an alternative which resists quantum computers since
its security is based on an NP-complete problem, namely decoding of random linear codes. 
The McEliece cryptosystem is the most prominent scheme to realize code-based cryptography.
Many codeclasses were proposed for the McEliece cryptosystem, but most of them are broken by now. 
Sendrier suggested to use ordinary concatenated codes, however, he also presented an attack on such codes. 
This work investigates generalized concatenated codes to be used in the McEliece cryptosystem. 
We examine the application of Sendrier's attack on generalized concatenated codes and present alternative methods for both partly finding the code structure and recovering the plaintext from a cryptogram.
Further, we discuss modifications of the cryptosystem making it resistant against these attacks.
}

\section{Introduction}

Public-key cryptography was introduced in 1976 by \cite{diffie1976}. The advantage in comparison to classical cryptosystems is, that sender and receiver do not have to share a common secret key, since two different keys are used for encryption and decryption. The receiver (Bob) publishes a public key, which is used by the sender (Alice)  to encrypt messages she wants to send to Bob. When Bob receives an encrypted message, he uses his private key for decryption. Nowadays, the security of public-key cryptosystems is usually based on number theoretic problems, like factorization of large numbers (RSA \cite{rivest1978}) or the discrete logarithm (Elgamal \cite{elgamal1985}). For solving these two problems there are no efficient algorithms known so far. However, Shor's algorithm solves these problems in polynomial time on quantum computers \cite{shor1994}. As soon as quantum computers will exist in the future, the aforementioned cryptosystems are broken and will become useless. Hence, there is a need for so-called post-quantum cryptography, i.e., new methods which resist the quantum computer.
One candidate for this purpose is code-based cryptography.

The first code-based cryptosystem was proposed by McEliece only two years after the emerge of public-key cryptography. The security of this system is based on the NP-complete problem of decoding random linear codes \cite{berlekamp1978}. Using the McEliece cryptosystem, encryption and decryption can be performed very efficiently. The main problem is the large size of the public key. For this reason, code-based cryptography was forgotten for a long time and now becomes interesting again due to quantum computer resistance.
Initially, McEliece suggested to use binary Goppa codes in his cryptosystem. Later, other code classes were suggested. However, in most cases it was also shown that there are attacks which break them. This work investigates generalized concatenated codes for use in the McEliece cryptosystem.

This paper is structured as follows. In Section~\ref{sec:mceliece} we summarize the McEliece cryptosystem and present a general attack on the system.
Section~\ref{sec:fundamentals} explains necessary fundamentals.
In Section~\ref{sec:concatenated} we present both, ordinary and generalized concatenated codes.
Furthermore, we discuss the use of generalized concatenated codes in Section~\ref{sec:mceliecegcc} and give a classification of generalized concatenated codes that connot be described as ordinary concatenated codes.
Section~\ref{sec:sendrier} is about Sendrier's attack, which recovers the structure of an ordinary concatenated code used in the McEliece cryptosystem.
We examine under which conditions the attack can be modified in order to work also with generalized concatenated codes.
In Section~\ref{sec:alternatives} we give alternatives for parts of Sendrier's attack in order to apply it on GC codes.
Also, an attack which recovers the plaintext from a cryptogram instead of finding the structure of the underlying code is explained. 
Section~\ref{sec:preventing} presents methods which can be used in order to prevent the attacks explained before.
Finally, Section~\ref{sec:conclusion} concludes the paper.

\section{McEliece Cryptosystem}
\label{sec:mceliece}

The McEliece cryptosystem, introduced in \cite{mceliece1978}, is the first public-key cryptosystem based on algebraic coding theory. For generating private and public key, Bob first selects an error-correcting code of length $n$ and dimension $k$ which can correct up to $t$ errors. He then computes a $(k \times n)$ generator matrix $\Gmatrix$ for this code. Furthermore, he randomly produces two matrices, $\Smatrix$, which is a $(k\times k)$ invertible matrix and $\Pmatrix$, which is a  $(n\times n)$ permutation matrix. These matrices are used in order to obfuscate $\Gmatrix$ and hence to hide the structure of the code. Therefor he calculates $\Gpub = \Smatrix \cdot \Pmatrix$ and publishes the pair $(\Gpub,t)$ as public key. The code as well as the matrices $\Gmatrix$, $\Smatrix$ and $\Pmatrix$ he keeps secret as private key. In order to send a message to Bob, Alice makes use of Bob's public key to encrypt her message. She breaks her message into $k$-bit blocks and multiplies each of these blocks to the obfuscated generator matrix $\Gpub$. To each of the blocks, she then adds a random vector $\e$ of length $n$ and weight $\leq t$, which can be interpreted as error. Hence, the calculation $\r = \m \cdot \Gpub+\e$ can directly be compared to the mapping of information blocks to codewords in a typical channel coding scenario. In order to decrypt the received message $\r$, Bob needs the matrices $\Pmatrix$ and $\Smatrix$ and a decoding algorithm for the used code. He calculates 
\begin{eqnarray*}
\rhat & = & \r \cdot \Pmatrix^{-1} = (\m \cdot \Gpub +\e) \cdot \Pmatrix^{-1} \\
& = &(\m \Smatrix \cdot \Gmatrix \cdot \Pmatrix + \e) \cdot \Pmatrix^{-1} \\
& = &\m \cdot \Smatrix \cdot \Gmatrix \cdot \Pmatrix \cdot \Pmatrix^{-1} + \e \cdot \Pmatrix^{-1} \\
& = &\m \cdot \Smatrix \cdot \Gmatrix + \e \cdot \Pmatrix^{-1}.
\end{eqnarray*}
In analogy to channel coding, $\rhat$ has the form of a received word consisting of the information word $\m \cdot \Smatrix$ and the error $\e \cdot \Pmatrix^{-1}$. Bob uses the decoding algorithm on $\rhat$ to obtain $\mhat = \m \cdot \Smatrix$. Finally he can multiply $\mhat$ with $\Smatrix^{-1}$ to retrieve $\m$.

In order to be used in the McEliece cryptosystem, a code class needs to fulfill two requirements. An efficient decoding algorithm has to exist for the used code class, and the code has to be indistinguishable from a random code. In the original proposal, binary Goppa codes were used. For suitable parameters they are unbroken until today, since they fulfill both requirements. Other code classes were suggested (cmp. Table~\ref{tab:codeclasses}).
\begin{table}[h]
\caption{Proposed code classes for the McEliece cryptosystem and suggested attacks}
\label{tab:codeclasses}
\begin{tabular}{l|l|l}
 Code class & Proposal & Attacks \\
 \hline 
 Generalized Reed-Solomon codes & 1986: Niederreiter & 1992: Sidelnikov, Shestakov \\ 
 Ordinary concatenated codes & 1995: Sendrier & 1998: Sendrier \\ 
 Reed-Muller codes & 1994: Sidelnikov & 2007: Minder, Shokrollahi \\
 & & 2013: Chizhov, Borodin \\
 Algebraic geometry codes & 1996: Janwa, Moreno & 2008: Faure, Minder \\
 & & 2014: Couvreur, M\'{a}rquez-Corbella, \\
 & & ~~~~~~~~~~Pellikaan\\ 
 Subcodes of GRS codes & 2005: Berger, Loidreau & 2010: Wieschebrink \\  
% \hline 
 \end{tabular}  
 \end{table}

In contrast to the more complex operations in number-theoretic methods like RSA, encryption and decryption in the McEliece scheme can be performed faster and are easy to implement. Since the security is based on the NP-complete problem of decoding random linear codes, the McEliece cryptosystem is a candidate for post-quantum cryptography. The system's main drawback is a large key size because generator matrices are used as keys. For this reason the cryptosystem was not applicable for a long time.

Another code-based cryptosystem similar to McEliece is the Niederreiter cryptosystem introduced in \cite{niederreiter1986}. In contrast to the McEliece cryptosystem, which uses a codeword with an added error as ciphertext, the Niederreiter cryptosystem represents the ciphertext as a syndrome and the error vector is the message. Instead of a generator matrix, Niederreiter uses a parity check matrix as public key, and hence is also called a dual version of McEliece. It was shown in \cite{li1994}, that the cryptosystems of McEliece and Niederreiter are equivalent when set up for corresponding choices of parameters. This means, that an attack on McEliece cryptosystem also breaks the Niederreiter cryptosystem and vice versa.

There are two kinds of possible attacks to the McEliece cryptosystem.
In a \emph{structural attack}, the adversary tries to retrieve the code structure and hence to recover $\Smatrix',\Gmatrix',\Pmatrix'$, and an efficient decoder of the code generated by $\Gmatrix'$, such that $\Smatrix' \cdot \Gmatrix' \cdot \Pmatrix'$.
Structural attacks were for example successfully applied to (subcodes of) generalized Reed-Solomon codes, Reed-Muller codes, Algebraic geometry codes and ordinary concatenated codes.
A \emph{non-structural attack} tries to recover the message from the cryptogram $r$ and the public-key $(\Gpub,t)$.
This is equivalent to the problem of decoding random linear codes.

\subsubsection*{Information Set Decoding Attack}

In the following, we give an example for a message attack called \emph{information set decoding}, as described in \cite{mceliece1978,lee1988}, of which we present an efficient modification for concatenated codes in Section~\ref{subsec:nonstruct}.

Given a code with parameters $(n,k,d)$ and generator matrix $\Gpub$.
In order to recover $\m$ in $\r = \m \cdot \Gpub + \e$ we randomly choose $\isdp$ coordinates of $\r$ and $\Gpub$.
With $\r_{\isdp}$ we denote the vector we get by only taking the $\isdp$ chosen coordinates from the vector~$\r$.
Similarly, $\Gpub_{\isdp}$ denotes the matrix obtained from $\Gpub$ by extracting the $\isdp$ chosen columns.
Restricting our vectors to the $\isdp$ chosen coordinates we obtain $\r_{\isdp} = \m \cdot \Gpub_{\isdp} + \e_{\isdp}$.
If we are lucky and choose $\isdp$ error-free coordinates, $\e_{\isdp}$ is the zero-vector.
Thus, the system of linear equations $\r_{\isdp} = \m \cdot \Gpub_{\isdp}$, with known $\Gpub_{\isdp}$, $\r_{\isdp}$ and unknown $\m$, has a solution, which is unique as long as $\Gpub_{\isdp}$ has rank $k$.

Obviously, we must choose $\isdp \geq k$.
For \emph{MDS} codes, we know that any set of $k$ columns of $\Gpub$ is linearly independent \cite{macwilliams1977theory}.
Other codes do not have this property, however, to our knowledge, this has been an open problem for many years.
We conjecture that for most practically good codes, a linearly independent set of columns is obtained with high probability already for values of $\isdp$ slightly larger than $k$.

\printalgoIEEE{
	\DontPrintSemicolon
	\KwIn{$\Gpub$ and $\r = \m \cdot \Gpub + \e$ with $\wtH(\e)=t$}
	\KwOut{$\m$}
	\Do{$\nexists \mhat$ or $\dH(\mhat \cdot \Gpub, \r) \geq \tfrac{d}{2}$}{
		Choose $\isdp$ many coordinates at random \hfill \tcp{$O(1)$} \label{line:isd_1}
		Solve $\r_{\isdp} = \mhat \cdot \Gpub_{\isdp}$ for $\mhat$ \hfill \tcp{$O(k^3)$} \label{line:isd_2}
	}
	\Return{$\mhat$}
	\caption{Information Set Decoding Attack}
	\label{alg:isd}
}

\begin{theorem}\label{thm:isd}
If $t < \tfrac{d}{2}$, the algorithm is correct. Its expected complexity is
\begin{eqnarray*}
\frac{{n \choose \isdp}}{{n-t \choose \isdp}} \cdot O(\isdp^3).
\end{eqnarray*}
\end{theorem}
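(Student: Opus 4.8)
The plan is to treat the two assertions of the theorem separately: first I would establish correctness, and then the expected running time, modelling the repeat-loop as a sequence of Bernoulli trials.

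For correctness, I would argue that whenever the loop terminates, the returned $\mhat$ must equal the true message $\m$. The exit condition guarantees $\dH(\mhat \cdot \Gpub, \r) < \tfrac{d}{2}$. On the other hand, since $\r = \m \cdot \Gpub + \e$ with $\wtH(\e) = t$, the true codeword satisfies $\dH(\m \cdot \Gpub, \r) = t < \tfrac{d}{2}$ by hypothesis. Both $\mhat \cdot \Gpub$ and $\m \cdot \Gpub$ are codewords, so by the triangle inequality $\dH(\mhat \cdot \Gpub, \m \cdot \Gpub) \le \dH(\mhat \cdot \Gpub, \r) + \dH(\r, \m \cdot \Gpub) < d$. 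As the minimum distance of the code is $d$, two codewords at distance strictly less than $d$ must coincide, so $\mhat \cdot \Gpub = \m \cdot \Gpub$; since $\Gpub$ has full row rank $k$, this forces $\mhat = \m$. Thus the output is always correct, and no iteration that hits an error position can pass the check, so spurious wrong outputs are impossible.

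For the running time I would compute the per-iteration success probability. A single iteration succeeds precisely when the $\isdp$ chosen coordinates avoid all $t$ error positions and the resulting submatrix $\Gpub_{\isdp}$ has rank $k$, in which case the unique solution is $\m$ by the argument above. Assuming, as discussed before the theorem and as holds exactly for MDS codes, that an error-free selection yields a full-rank $\Gpub_{\isdp}$, the success probability of one iteration equals the probability of drawing all $\isdp$ coordinates from the $n-t$ error-free positions, namely $p = \binom{n-t}{\isdp}/\binom{n}{\isdp} > 0$; positivity of $p$ also guarantees almost-sure termination. The number of iterations is then geometrically distributed with parameter $p$, so its expectation is $1/p = \binom{n}{\isdp}/\binom{n-t}{\isdp}$. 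Each iteration is dominated by the cost of solving the (overdetermined) system $\r_{\isdp} = \mhat \cdot \Gpub_{\isdp}$ in $k$ unknowns, i.e.\ $O(\isdp^3)$ field operations (consistent with the $O(k^3)$ annotation since $\isdp \ge k$), the verification step being absorbed into this bound. Multiplying expected iteration count by per-iteration cost yields the claimed $\frac{\binom{n}{\isdp}}{\binom{n-t}{\isdp}} \cdot O(\isdp^3)$.

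The main obstacle is the rank condition on $\Gpub_{\isdp}$: strictly, an error-free selection only guarantees that $\m$ is \emph{a} solution, and the recovered $\mhat$ equals $\m$ exactly when $\Gpub_{\isdp}$ has rank $k$. For MDS codes this always holds, since any $k$ columns are independent and $\isdp \ge k$; for general codes it is the high-probability heuristic conjectured before the theorem, so the stated expectation is really an estimate that becomes exact under that assumption. I would flag this dependence explicitly rather than claim a fully rigorous bound for arbitrary codes.
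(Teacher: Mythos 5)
Your proposal is correct and follows essentially the same route as the paper's proof: unique decoding under $t<\tfrac{d}{2}$ for correctness, almost-sure termination from a positive per-iteration success probability, and a geometric iteration count with parameter $p=\binom{n-t}{\isdp}/\binom{n}{\isdp}$ multiplied by the $O(\isdp^3)$ cost of solving the linear system. The only differences are cosmetic: you spell out the triangle-inequality argument for uniqueness where the paper cites it as known, and you make the full-rank caveat explicit, which the paper also assumes away when estimating the complexity.
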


\begin{proof}
From coding theory we know that if $t< \tfrac{d}{2}$, there is a unique $\mhat$ such that the Hamming distance of $\mhat \cdot \Gpub$ and $\r$ is $\dH(\mhat \cdot \Gpub, \r) \geq \tfrac{d}{2}$.
This $\mhat$ is also the unique solution of $\r-\e = \mhat \cdot \Gpub$.
If $\isdp$ is chosen large enough, a random submatrix $\Gpub_{\isdp}$ of rank $k$ with error-free positions is found in a step with a non-zero probability, and thus, by the lemma of Borel--Cantelli, the algorithm terminates in finite time with probability $1$.

Concerning the complexity, we assume that $\isdp$ is chosen sufficiently large such that the probability that a submatrix $\Gpub_{\isdp}$ has rank $<k$ can be neglected.
Thus, the number of loops required to terminate the algorithm is geometrically distributed with parameter
\begin{eqnarray*}
p = \frac{{n-t \choose \isdp}}{{n \choose \isdp}},
\end{eqnarray*}
which is exactly the probability of choosing $\isdp$ out of $n-t$ correct positions in Line~\ref{line:isd_1} of Algorithm~\ref{alg:isd}.
Thus, the expected number of loops required is $\frac{1}{p}$ and together with the complexity of Line~\ref{line:isd_2}, which is $O(\isdp)$, we obtain the expected complexity
\begin{eqnarray*}
\frac{{n \choose \isdp}}{{n-t \choose \isdp}} \cdot O(\isdp^3).
\end{eqnarray*}
\end{proof}

We can thus conclude that in case of practical codes, where $\isdp \approx k$, we obtain an upper bound on the work factor\footnote{Estimation of the complexity up to a constant factor which is not depending on the parameters of the system.} of
\begin{eqnarray*}
\frac{{n \choose k}}{{n-t \choose k}} \cdot O(k^3),
\end{eqnarray*}
where $t = \lfloor \tfrac{d-1}{2} \rfloor$.
According to \cite{heyse2013} the parameters have to be chosen such that a work factor of $2^{128}$ (for mid-term security) or $2^{256}$ (for long-term security) is obtained.
There are several speed-ups \cite{coffey1990complexity,peters2010information,becker2012decoding}, and generalizations \cite{lee1988} of the information set decoding attack.

\section{Fundamentals}
\label{sec:fundamentals}

In this section, we present notations and known results which we do not assume all readers to know.

\subsection{Notation}

Let $V,W$ be $\Fq$-vector spaces. A mapping $\varphi \, : \, V \, \to \, W$ is called \emph{$\Fq$-linear} if
\begin{eqnarray*}
\varphi(a_1 \vec{v}_1 + a_2 \vec{v}_2) = a_1 \varphi(\vec{v}_2) + a_2 \varphi(\vec{v}_2)
\end{eqnarray*}
for all $a_1,a_2 \in \Fq$ and $\vec{v}_1, \vec{v}_2 \in V$.
The \emph{Hamming weight} $\wtH(\c)$ of $\c \in \Fq^n$ is defined as the number of non-zero positions of $\c$.
The \emph{Hamming distance} of $\c_1,\c_2 \in \Fq^n$ is $\dH(\c_1,\c_2) := \wtH(\c_1-\c_2)$.
An $\Fq$-linear code $\Ccode(q^m;n,k,d)$ of length $n$, dimension $k$ and minimum distance $d = \min\limits_{\c_1 \neq \c_2} \{\dH(\c_1,\c_2)\}$ over $\Fqm{m}$ is a $k$-dimensional $\Fq$-subspace of $\Fqm{m}^n$.

\subsection{Vector and Matrix Representation of Extension Fields}

Every finite field $\Fqm{m}$ is an $\Fq$-vector space of dimension $m$.
Thus, there is a basis $\Bset= \{\beta_1,\dots,\beta_m\} \subseteq \Fqm{m}$ in which every element $a \in \Fqm{m}$ has a unique representation $a = \sum_{i=1}^{m} a_i \beta_i$ with $a_i \in \Fq$.
Define the vector space isomorphism
\begin{align*}
\ext_\Bset : \Fqm{m} \to \Fq^m, a \mapsto \a = [a_1,\dots,a_m].
\end{align*}
We call $\ext_\Bset(a)$ the \emph{vector representation} of $a$ with respect to the basis $\Bset$.
It is well-known that the set $\{\ext_\Bset(\cdot) : \text{$\Bset$ basis of $\Fqm{m}$ over $\Fq$}\}$ is equal to all vector space isomorphisms ($=$ $\Fq$-linear maps) $\Fqm{m} \to \Fq^m$.
This implies, e.g., that for any $b \in \Fqm{m}$ and $\b \in \Fq^m$, there is a basis $\Bset$ such that $\ext_\Bset(b) = \b$.

\begin{lemma}\label{lem:matrix_rep}
Some facts about vector and matrix representation of finite extensions of finite fields $\Fqm{m}/\Fq$:
\begin{description}
\item[(i)]\label{item:1} Every finite field $\Fqm{m}$ is isomorphic to a subfield $\Mqm{m}$ of the matrix ring $\Fq^{m \times m}$. We write $\mr{a} \in \Mqm{m}$ to denote the matrix representation of an element $a \in \Fqm{m}$.
\item[(ii)] Every column or row of a matrix representation of $\Fqm{m}$ can be used to uniquely represent elements of $\Fqm{m}$. We denote the vector representation of an element $a \in \Fqm{m}$, given by this column or row, by $\vr{a} \in \Fq^m$.
$\vr{\cdot} = \ext_\Bset(\cdot)$ for some basis $\Bset$.
\item[(iii)] If a specific column or row as in (ii) is chosen, the set of representative vectors of all elements in $\Fqm{m}$ is equal to $\Fq^m$.
\item[(iv)] If a specific row as in (ii) is chosen, the multiplication of two elements $a,b \in \Fqm{m}$ corresponds to $\vr{a \cdot b} = \vr{a} \cdot \mr{b}$.
\item[(v)] If a specific column as in (ii) is chosen, the multiplication of two elements $a,b \in \Fqm{m}$ corresponds to $\vr{a \cdot b} = \mr{a} \cdot \vr{b}$.
\item[(vi)] For a specific row or column as in (ii) and an arbitrary basis $B$ of $\Fqm{m}$ over $\Fq$, $\Mqm{m}$ can be chosen such that the vector representation of $a \in \Fqm{m}$ is $\ext_\Bset(a)$.
\end{description}
\end{lemma}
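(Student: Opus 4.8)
The plan is to realize the entire lemma through the \emph{regular representation} of $\Fqm{m}$ as $\Fq$-linear maps acting on itself. Fix a basis $\Bset=\{\beta_1,\dots,\beta_m\}$ and, for $a\in\Fqm{m}$, let $\mr{a}$ be the matrix of the multiplication map $x\mapsto x\cdot a$ with respect to $\Bset$, normalized so that $\ext_\Bset(x)\cdot\mr{a}=\ext_\Bset(x\cdot a)$ for every $x$. First I would verify that $a\mapsto\mr{a}$ is an $\Fq$-linear ring homomorphism: additivity and $\Fq$-linearity ($\mr{\lambda a}=\lambda\,\mr{a}$ for $\lambda\in\Fq$) are inherited from the linearity of $\ext_\Bset$, while $\mr{a}\cdot\mr{b}=\mr{a\cdot b}$ follows from associativity, since $\ext_\Bset(x)\mr{a}\mr{b}=\ext_\Bset((x a)b)=\ext_\Bset(x(ab))=\ext_\Bset(x)\mr{a\cdot b}$ holds for all $x$. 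Injectivity is immediate, because $\mr{a}=\Zmatrix$ forces $\ext_\Bset(1\cdot a)=\vec 0$, hence $a=0$. An injective homomorphism out of the field $\Fqm{m}$ has image a subring of $\Fq^{m\times m}$ isomorphic to $\Fqm{m}$, i.e. the field $\Mqm{m}$; this settles (i).

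For (ii)--(v) I would read off single rows and columns of $\mr{a}$. Reading a fixed row $i_0$ defines $a\mapsto\vr{a}:=(\text{row } i_0 \text{ of } \mr{a})=\ext_\Bset(\beta_{i_0}\cdot a)$, a composition of the isomorphism $\mr{\cdot}$ with a coordinate projection; it is $\Fq$-linear and injective (as $\beta_{i_0}\neq0$), hence an isomorphism $\Fqm{m}\to\Fq^m$. By the fact recalled just before the lemma that every such isomorphism equals $\ext_{\Bset'}$ for some basis $\Bset'$, we obtain $\vr{\cdot}=\ext_{\Bset'}$, which is (ii); surjectivity of an isomorphism then yields (iii), and the identical argument applies to a fixed column. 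Properties (iv) and (v) now drop out of the homomorphism property alone: from $\mr{a\cdot b}=\mr{a}\cdot\mr{b}$, the row $i_0$ of the product equals $(\text{row } i_0 \text{ of } \mr{a})\cdot\mr{b}$, giving $\vr{a\cdot b}=\vr{a}\cdot\mr{b}$, and the column $j_0$ of the product equals $\mr{a}\cdot(\text{column } j_0 \text{ of } \mr{b})$, giving $\vr{a\cdot b}=\mr{a}\cdot\vr{b}$. Reading a row thus proves (iv) and reading a column proves (v) for one and the same representation.

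Finally, for (vi) I would exploit the freedom in the basis defining the representation. Reading row $i$ of the $\Bset$-regular representation produces $\ext_\Bset(\beta_i\cdot a)$, which as a function of $a$ is $\ext_{\Bset'}$ for the rescaled basis $\Bset'=\beta_i^{-1}\Bset$; replacing $\Bset$ by another basis $\mathcal{C}$ in the construction lets $\vr{\cdot}$ range over a whole family of $\ext$-maps, and the goal is to pick $\mathcal{C}$ (equivalently the embedding $\Mqm{m}$) so that the chosen row or column reproduces the prescribed $\ext_\Bset$ exactly. I expect this normalization to be the main obstacle. Since the multiplicative identity of any $m$-dimensional matrix field over $\Fq$ is forced to be the identity matrix $I$, we have $\mr{1}=I$, so reading row $i_0$ always sends $1\mapsto e_{i_0}$; consequently $\vr{1}=\ext_\Bset(1)$ can hold only when the chosen index carries the basis element $1$. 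The delicate step is therefore to match this normalization---to set up $\mathcal{C}$ so that the identity lands in the chosen coordinate while the conditions $\gamma_i=\gamma_{i_0}\cdot\beta_i$ reproduce the remaining target vectors---and to confirm the resulting matrix set is still a valid field embedding; the symmetric argument with columns and left multiplication then handles the column case.
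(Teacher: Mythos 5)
Your treatment of (i)--(v) is correct and is essentially the paper's route: you realize $\Mqm{m}$ as the regular representation, and (iv)--(v) drop out of $\mr{a\cdot b}=\mr{a}\cdot\mr{b}$ exactly as in the paper's proof. Two small remarks. First, your explicit formula ``row $i_0$ of $\mr{a}$ equals $\ext_\Bset(\beta_{i_0}\cdot a)$'' is a nice sharpening of the paper's more abstract argument (the paper instead observes that every nonzero element of $\Mqm{m}$ is invertible, hence no two elements share a row or column). Second, for the \emph{column} of the row-normalized regular representation your phrase ``the identical argument applies'' is not literally true, since there is no analogous closed formula; injectivity of the column map should instead be justified as the paper does, by noting that a matrix in $\Mqm{m}$ with a zero column is singular and hence zero. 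This is a one-line patch, not a real gap.

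Part (vi) is where you stop short, but the obstruction you isolate is genuine and worth making explicit: since any subfield of $\Fq^{m\times m}$ isomorphic to $\Fqm{m}$ must have the identity matrix as its unit, $\mr{1}=\vec{I}$ always, so reading row (or column) $i_0$ forces $\vr{1}=\e_{i_0}$, and hence $\vr{\cdot}=\ext_\Bset(\cdot)$ can only hold when $\ext_\Bset(1)=\e_{i_0}$, i.e.\ when $\beta_{i_0}=1$. For a truly \emph{arbitrary} basis $\Bset$ this fails, and the paper's own one-line proof --- conjugating by a change-of-basis matrix, $\mrtilde{a}=B\cdot\mr{a}\cdot B^{-1}$ --- does not escape it, because conjugation fixes $\vec{I}$. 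So the honest conclusion is that (vi) as stated needs the restriction that the chosen index carries the basis element $1$ (under which your regular-representation construction with respect to $\Bset$ itself already finishes the proof, since then row $i_0$ of $\mr{a}$ is $\ext_\Bset(1\cdot a)=\ext_\Bset(a)$), or equivalently it should be restated as: for any $\Fq$-isomorphism $\Fqm{m}\to\Fq^m$ mapping $1\mapsto\e_{i_0}$ there is a choice of $\Mqm{m}$ realizing it as the row-$i_0$ reading. Note that the only place the paper uses this item (the proof of Theorem~\ref{thm:structure_OCC}) actually needs just the weaker, always-true fact that for a given basis $B$ there exists a representation with $\ext_B(\alpha\cdot v)=\mr{\alpha}\cdot\ext_B(v)$, which is item (v) for the regular representation; so the flaw in (vi) is in its formulation rather than in anything downstream. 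Rather than leaving this as a ``delicate step,'' you should state the corrected hypothesis and close the argument.
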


\begin{proof}
\begin{description}
\item[(i)] This statement is well-known and can e.g. be found in \cite{lidl1997finite} or \cite{wardlaw1994matrix}.
\item[(ii)] Since the operations multiplication, addition and inversion in $\Fqm{m}$ correspond to the same operations of matrices in $\Mqm{m}$, all matrices except for the zero matrix in $\Mqm{m}$ are invertible.
Now choose an arbitrary row (column) index $i$.
We show that the rows (columns) of matrices in $\Mqm{m}$ of this index are distinct.
Choose two matrices $\Mmatrix_1,\Mmatrix_2 \in \Mqm{m}$.
Assume that their $i$-th rows (columns) are the same.
Then the $i$-th row (column) of $\Mmatrix_1-\Mmatrix_2 \in \Mqm{m}$ is the zero vector.
Thus, $\Mmatrix_1-\Mmatrix_2$ is not invertible and must be the zero matrix, implying that $\Mmatrix_1=\Mmatrix_2$.

Let $\phi(a)$ be the operation of extracting a specific row (column) from $a \in \Mqm{m}$.
Since 
\begin{eqnarray*}
\phi(\mr{\alpha \cdot a + \beta \cdot b}) &=& \phi(\alpha \cdot \mr{a} + \beta \cdot \mr{b}) \\
&=&  \alpha \cdot \phi(\mr{a}) + \beta \cdot \phi(\mr{b})
\end{eqnarray*}
%$\phi(\mr{\alpha \cdot a + \beta \cdot b}) = \phi(\alpha \cdot \mr{a} + \beta \cdot \mr{b}) = \alpha \cdot \phi(\mr{a}) + \beta \cdot \phi(\mr{b})$ 
for all $\alpha, \beta \in \Fq$ and $a,b \in \Fqm{m}$, $\phi(\mr{\cdot})$ is a vector space isomorphism and is therefore equal to $\ext_\Bset(\cdot)$ for some basis $\Bset$.
\item[(iii)] This follows from a simple counting argument. Due to (ii), a specific row (column) represents all elements from $\Mqm{m}$, thus also from $\Fqm{m}$, uniquely. Hence, $|\{\mr{a} : a \in \Fqm{m}\}| = |\Fqm{m}| = q^m = |\Fq^m|$. Since $\{\mr{a} : a \in \Fqm{m}\} \subseteq \Fq^m$, \\$\{\mr{a} : a \in \Fqm{m}\} = \Fq^m$.
\item[(iv)] It is clear by $\mr{a \cdot b} = \mr{a} \cdot \mr{b}$ and by looking at the operations necessary to calculate the $i$-th column ($= \vr{a \cdot b}$) of the result on the right-hand side.
\item[(v)] Analog statement as in (iv).
\item[(vi)] The statement is clear since we can simply change the basis of the matrix representation, e.g., by setting $\mrtilde{a} = B \cdot \mr{a} \cdot B^{-1}$ for all $a \in \Fqm{m}$.
\end{description}
\end{proof}

\subsection{Coding Theory Basics}

We will need the following lemma in Section~\ref{subsec:sendrier1}.
More precisely, we require the slightly weaker statement that for any linear code $\Ccode$ with dual distance $d\dual$, for any $r<d\dual$ positions, we can find a codeword in $\Ccode$ in which we can choose the $r$ positions arbitrarily.

\begin{lemma}{\cite{macwilliams1977theory}}\label{lem:from_Macwilliams}
	Any set of $r \leq d\dual-1$ columns of $[\mathcal{C}]$ contains each $r$-tuple exactly $\frac{2^k}{2^r}$ times, and $d\dual$ is the largest number with this property, where $[\mathcal{C}]$ is a $2^k \times n$ array of codewords of the code $\mathcal{C}(n,k,d)$. 
\end{lemma}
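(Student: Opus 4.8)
The plan is to interpret the rows of $[\mathcal{C}]$ through the projection onto the $r$ chosen coordinates and to reduce the counting statement to a rank condition that is controlled by the dual distance. Fix a set $S$ of $r$ column indices and let $\pi_S : \mathcal{C} \to \mathbb{F}_2^{r}$ be the $\mathbb{F}_2$-linear map sending a codeword to its restriction to the positions in $S$. Since $[\mathcal{C}]$ lists all $2^k$ codewords as its rows, the number of times a given $r$-tuple $\vec{u} \in \mathbb{F}_2^{r}$ occurs in the columns indexed by $S$ equals $|\pi_S^{-1}(\vec{u})|$. Because $\pi_S$ is linear, every nonempty fibre is a coset of $\ker \pi_S$ and hence has cardinality $|\ker \pi_S| = 2^k / |\operatorname{im}\pi_S|$. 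Thus each $r$-tuple occurs exactly $2^k/2^r = 2^{k-r}$ times if and only if $\pi_S$ is surjective, i.e. $\operatorname{im}\pi_S = \mathbb{F}_2^{r}$.

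Next I would translate surjectivity of $\pi_S$ into a statement about a generator matrix $G$ of $\mathcal{C}$. Writing codewords as $\vec{m} G$ with $\vec{m} \in \mathbb{F}_2^{k}$, the image of $\pi_S$ is the row space of the $k \times r$ submatrix $G_S$ consisting of the columns of $G$ indexed by $S$, so $\pi_S$ is surjective exactly when those $r$ columns of $G$ are $\mathbb{F}_2$-linearly independent. It therefore suffices to show that any $r \le d\dual - 1$ columns of $G$ are independent, and that some set of $d\dual$ columns is dependent.

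The key step is the classical characterization of the dual distance: $d\dual$ equals the smallest number of $\mathbb{F}_2$-linearly dependent columns of $G$. Indeed, $\vec{v} \in \mathcal{C}\dual$ iff $G\vec{v}^{\mathsf T} = \vec{0}$, so a nonzero $\vec{v} \in \mathcal{C}\dual$ of Hamming weight $w$ is precisely a linear dependence $\sum_{i \in \supp(\vec{v})} \vec{g}_i = \vec{0}$ among the $w$ columns $\vec{g}_i$ of $G$ indexed by $\supp(\vec{v})$; this is the standard fact that the minimum distance of a code equals the minimum number of linearly dependent columns of a parity-check matrix, applied to $\mathcal{C}\dual$ (for which $G$ is such a matrix). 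Given this, every choice of $r \le d\dual - 1$ columns of $G$ is independent, hence $\pi_S$ is surjective for all such $S$, and the count $2^{k-r}$ holds uniformly, which is the first assertion.

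For the tightness claim I would take a minimum-weight codeword of $\mathcal{C}\dual$, whose support $S_0$ has size $d\dual$ and indexes a dependent set of columns of $G$. For this $S_0$ the map $\pi_{S_0}$ is not surjective, so its image is a proper subspace of $\mathbb{F}_2^{d\dual}$; consequently at least one $d\dual$-tuple never occurs and the uniform count fails at $r = d\dual$. Hence the bound $r \le d\dual - 1$ cannot be enlarged, i.e. $d\dual$ is the largest number for which the asserted balancing is guaranteed for all column sets, as claimed. I expect the only genuine obstacle to be stating the dual-distance/column-dependence characterization precisely; the coset-counting and rank arguments around it are routine linear algebra.
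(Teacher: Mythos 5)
Your argument is correct and complete. Note, however, that the paper does not prove this lemma at all: it is quoted verbatim from \cite{macwilliams1977theory} and used as a black box, so there is no in-paper proof to compare against. What you supply is the standard self-contained argument for \emph{linear} codes: reduce the balancedness of a column set $S$ to surjectivity of the restriction map $\pi_S$ via the coset/fibre count $|\ker\pi_S|=2^{k-r}$, identify surjectivity with linear independence of the corresponding columns of a generator matrix $G$, and then invoke the characterization of $d\dual$ as the minimum number of dependent columns of $G$ (which is a parity-check matrix for $\Ccode\dual$); the support of a minimum-weight dual codeword witnesses tightness at $r=d\dual$. This is cleaner and more elementary than the textbook's treatment, which establishes the equivalence between dual distance and orthogonal-array strength in greater generality (covering nonlinear codes) via character-sum/MacWilliams-transform machinery; your route buys brevity at the cost of relying on linearity, which is all the paper needs. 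One point worth making explicit if you polish this: the count $2^k/2^r$ presupposes $r\le k$, which is automatic here since at most $k$ columns of a $k\times n$ matrix can be independent, forcing $d\dual\le k+1$.
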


\section{Concatenated Codes}
\label{sec:concatenated}

We distinguish between ordinary concatenated codes (OC codes or OCC) introduced by \cite{forney1966}), and generalized concatenated codes (GC codes or GCC) introduced by \cite{blokh1974}. Concatenated codes can be used to construct long codes by only using short codes. The main advantage of such a construction is a comparatively short decoding time, since we only have to decode short codes.

\subsection{Ordinary Concatenated Codes}
\label{subsec:occ}

We describe OC codes as in \cite{sendrier1995structure,sendrier1998concatenated}.
The following codes and mappings uniquely determine an OC code.
\begin{itemize}
\item Linear \emph{inner code} $\Bcode(q;\nB,\kB,\dB)$.
\item Linear \emph{outer code} $\Acode(q^\kB;\nA,\kA,\dA)$.
\item $\Fq$-linear map $\theta \, : \, \Fqm{\kB} \, \to \, \Acode$.
\end{itemize}
We define the mapping
\begin{eqnarray*}
\Theta \, : \, \Fqm{\kB}^\nA \, &\to& \, \Bcode^\nA \\
\begin{bmatrix}a_1 \\ \vdots \\ a_\nA \end{bmatrix} \, &\mapsto& \, \begin{bmatrix}\theta(a_1) \\ \vdots \\ \theta(a_\nA) \end{bmatrix}.
\end{eqnarray*}
Then, the corresponding OC code is given as
\begin{eqnarray*}
\COC = \Theta(\Acode) \subseteq \Bcode^{\nA}
\end{eqnarray*}
and due to its construction, it is $\Fq$-linear since $\Acode$ is $\Fqm{\kB}$-linear, implying $\Fq$-linearity, and $\theta$ is $\Fq$-linear.
The code has $(q^\kB)^\kA = q^{\kB \cdot \kA}$ codewords, each of it consisting of $\nA$ many codewords from $\Bcode$, resulting in a codelength of $\nA \cdot \nB$ elements of $\Fq$.
Thus, the code has parameters
\begin{eqnarray*}
\COC(q;\nOC= \nA \cdot \nB, \kOC = \kB \cdot \kA, \dOC),
\end{eqnarray*}
where $\dOC$ is the minimum distance, whose value we do not consider here.

\subsection{Generalized Concatenated Codes}
\label{subsec:gcc}

GC codes are a generalization of OC codes, introduced by \cite{blokh1974}.
Here, we give a definition which is similar to the above mentioned definition of OC codes by \cite{sendrier1995structure}, which was not given in this form before.
A comprehensive overview of GC codes can be found in \cite{bossert1999} and we explain in Appendix~\ref{app:GCC} why our definition matches \cite{bossert1999}.
Similar to OC codes, we require the following parameters, codes and mappings.
\begin{itemize}
\item $\ki{1}, \ki{2}, \dots, \ki{\ell} \in \NN$ with $\kB = \sum_{i=1}^{\ell} \ki{i}$.
\item $\Fqm{\ki{i}}$-linear outer codes $\Acode^{(i)}(q^\ki{i};\nA,\kAi{i},\dAi{i})$ for $i=1,\dots,\imax$
\item $\Fq$-linear inner code $\Bcode(q;\nB,\kB,\dB)$.
\item $\Fq$-linear map $\theta \, : \, \bigoplus_{i=1}^{\ell} \Fqm{\ki{i}} \, \to \, \Bcode$.
\end{itemize}
Again, we define a mapping
\begin{eqnarray*}
\Theta \, : \, \bigoplus\limits_{i=1}^{\ell} (\Fqm{\ki{i}})^\nA \, &\to& \, \Bcode^\nA \\
\left(
\begin{bmatrix} a_{1,1} \\ a_{1,2} \\ \vdots \\ a_{1,\nA} \end{bmatrix},
\begin{bmatrix} a_{2,1} \\ a_{2,2} \\ \vdots \\ a_{2,\nA} \end{bmatrix},
\dots,
\begin{bmatrix} a_{\imax,1} \\ a_{\imax,2} \\ \vdots \\ a_{\imax,\nA} \end{bmatrix}
\right)
\, &\mapsto& \,
\begin{bmatrix}\theta(a_{1,1}, \dots, a_{\imax,1}) \\ \theta(a_{1,2}, \dots, a_{\imax,2}) \\ \vdots \\ \theta(a_{1,\nA}, \dots, a_{\imax,\nA}) \end{bmatrix}.
\end{eqnarray*}
The corresponding GC code is given by
\begin{eqnarray*}
\CGC = \Theta\left(\bigoplus\limits_{i=1}^{\ell} \Acode^{(i)} \right) \subseteq \Bcode^{\nA}
\end{eqnarray*}
with parameters $\CGC(q; \nGC = \nA \cdot \nB, \kGC = \sum\limits_{i=1}^{\ell} \kAi{i}, \dGC)$, see Appendix~\ref{app:GCC}.
In our definition, $\CGC \subseteq \Bcode^\nA$, which is often written as an $\nA \times \nB$ matrix over $\Fq$.
We can also write it as an $\nA \cdot \nB$ vector over $\Fq$ and the information words from the set $\bigoplus_{i=1}^{\imax} \Fqm{\ki{i}}$ as vectors of dimension $\kGC = \sum_{i=1}^{\imax} \kAi{i}$ over $\Fq$, which we need in order to define a generator matrix $\Gmatrix$ of the code.
The advantage of GC compared to OC codes is that we allow several outer codes with different dimensions and are hence able to obtain better codes, cf. Appendix~\ref{app:GCC}.

\section{The McEliece Cryptosystem using GCC}
\label{sec:mceliecegcc}

The motivation to use concatenated codes in the McEliece cryptosystem has the big advantage of very low decoding complexity, which is retained when going from OC to GC codes.
OC codes have the drawback of possibly larger key sizes at the same security level compared to codes without concatenated structure.
This disadvantage is not present in the GCC case since its construction admits larger overall dimension at the same minimum distance, or a better decoding performance at the same dimension compared to OC codes.

\subsection*{Assumption that $\theta$ is linear}

In the McEliece cryptosystem, only the use of linear codes is reasonable because the existence of a generator matrix is required.
In the original definition of GC codes, it was not assumed that the mapping $\theta$ is $\Fq$-linear.
However, $\theta \, : \, \bigoplus_{i=1}^{\ell} \Fqm{\ki{i}} \, \to \, \Bcode$ must be bijective and thus, its image is a linear subspace of $\Fq^\nB$.

One can now ask the question whether a GC code with a non-linear $\theta$ can be a linear code.
And if yes, is there an alternative GCC construction using a linear $\theta'$ and possibly different other outer codes, yielding the same code.
Both questions are open problems.
If the first one is true and the second one is not, these codes might resist the attacks presented in this paper.

However, since most good GC code constructions having low decoding complexity, which motivated the use of GC codes here, use an $\Fq$-linear $\theta$ (cf. Appendix~\ref{app:GCC}), we make the assumption that $\theta$ is linear.

\subsection*{A classification of GC codes that are no OC codes}
\label{subsec:equivalence}

In general, it is well-known that GC codes are a generalization of OC codes \cite{bossert1999}.
Obviously, any OCC is a GCC, given by only one outer code.
On the other hand, it is mentioned in \cite{chabanne1993concatenated} that any GCC can be viewed as an OCC.
However, in general, one must admit non-linear inner and outer codes in the definiton of OCC to make this statement become true.
Since there is a known structural attack on OCC \cite{sendrier1995structure,sendrier1998concatenated}, we would like to know which GCC cannot be constructed as OCC.
We are not able to give a complete classification of the set of GC codes not containing OC codes.
However, we are able to prove the following statement for the special case of
\begin{align*}
\ki{1}=\ki{2}=\dots=\ki{\imax},
\end{align*}
which is a very important sub-class of GC codes, cf. Appendix~\ref{app:GCC}.

\begin{theorem}\label{thm:structure_OCC}
If $\ki{1}=\ki{2}=\dots=\ki{\imax}$, a GCC is an OCC $\Leftrightarrow$ $\Acodei{i}=\Acodei{j}$ $\forall i,j$.
\end{theorem}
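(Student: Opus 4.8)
The plan is to strip away the inner code and reduce the claim to a statement about the outer code. Write $W := \bigoplus_{i=1}^{\ell}\Fqm{\ki{i}}$, an $\Fq$-space of dimension $\kB$, and $U := \bigoplus_{i=1}^{\ell}\Acodei{i}\subseteq W^{\nA}$, so that $\CGC=\Theta(U)$ where $\Theta$ applies the fixed $\Fq$-isomorphism $\theta:W\to\Bcode$ coordinatewise. An OCC over the same inner blocks is likewise $\Theta'(\Acode')$ for an $\Fqm{\kB}$-linear outer code $\Acode'$ and a coordinatewise $\Fq$-isomorphism $\theta':\Fqm{\kB}\to\Bcode$. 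Since $\Theta$ and $\Theta'$ identify the ambient spaces block by block, I would first prove that $\CGC$ is an OCC iff there is a field structure $F\cong\Fqm{\kB}$ on $W$, acting coordinatewise on $W^{\nA}$, for which $U$ is $F$-linear: given an OCC, $\psi:=\theta'^{-1}\circ\theta$ transports the $\Fqm{\kB}$-structure to $W$ and sends $U$ to $\Acode'$; conversely, any such $F$ yields the OCC with $\theta'=\theta$ and $\Acode'=U$. The only delicate point is the freedom in the inner code, which I would settle by noting that the inner-block projections of $\CGC$ pin down $\Bcode$ (hence force $\Bcode'=\Bcode$) whenever no outer coordinate is identically zero, the generic and only interesting case.

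For the implication $\Leftarrow$, assume $\ki{1}=\dots=\ki{\ell}=:k$ and $\Acodei{i}=\Acode$ for all $i$. I would take $F=\Fqm{k\ell}\supseteq\Fqm{k}$, choose an $\Fqm{k}$-basis $\gamma_1,\dots,\gamma_\ell$ of $F$, and identify $W=\Fqm{k}^{\ell}\cong F$ by $(x_1,\dots,x_\ell)\mapsto\sum_s\gamma_s x_s$. Then $U=\sum_s\gamma_s\Acode$ is exactly the scalar extension $F\cdot\Acode$, which is $F$-linear. A dimension count finishes it: a generator matrix of $\Acode$ over $\Fqm{k}$ stays full rank over $F$, so $\dim_{\Fq}(F\cdot\Acode)=k\ell\cdot\kA=\dim_{\Fq}U$, giving $U=F\cdot\Acode$. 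This direction is routine.

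For the harder implication $\Rightarrow$, suppose $U$ is $F$-linear. The key observation is that $U$ carries two coordinatewise actions: the block ring $R:=\prod_i\Fqm{\ki{i}}$ (because each $\Acodei{i}$ is $\Fqm{\ki{i}}$-linear and $U$ is stable under the summand projections), and the field $F$. Let $A\subseteq\mathrm{End}_{\Fq}(W)$ be the algebra they generate. Since $\dim_F W=1$, $F$ and hence $A$ acts irreducibly, so $W$ is a simple faithful $A$-module; by the double-centraliser theorem $A=\mathrm{End}_D(W)$ with $D:=\mathrm{End}_A(W)$ a field containing $\Fq$, and here $D=F\cap R$ because $\mathrm{End}_F(W)=F$ while $\mathrm{End}_R(W)=R$ ($W$ being the multiplicity-free sum of the simple $R$-modules). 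Then $W^{\nA}$ is isotypic, its $A$-submodules are precisely $W\otimes_D V$ for $D$-linear codes $V\subseteq D^{\nA}$, and so $U=W\otimes_D V$. Each $\pi_i$ is $R$-linear, hence $D$-linear, whence $\Acodei{i}=\pi_i(U)=\pi_i(W)\otimes_D V=\Fqm{k}\otimes_D V$, the scalar extension of $V$ to $\Fqm{k}$. This is where $\ki{1}=\dots=\ki{\ell}=k$ is decisive: every block $\pi_i(W)$ is the \emph{same} field $\Fqm{k}$, so the right-hand side does not depend on $i$ and all $\Acodei{i}$ coincide.

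I expect the main obstacle to lie in the $\Rightarrow$ direction, specifically in setting up the module-theoretic machinery: establishing that $A$ is simple with commutant exactly $D=F\cap R$ and classifying the submodules of the isotypic module $W^{\nA}$. The conceptual crux is that the block $\Fqm{k}$-structure and the field $\Fqm{\kB}$-structure need not be compatible for general $k$ — for $\ki{i}=1$ one can avoid the machinery, since there $\pi_i\,f\,\iota_i$ is a scalar and nonzero scalars fix each code, forcing the mutual containments $\Acodei{i}\subseteq\Acodei{j}$ directly — and it is precisely the intersection field $D=F\cap R$ that quantifies this incompatibility and makes the argument uniform. Secondarily, I would need to carry out the reduction of the first paragraph rigorously, i.e.\ confirm that permitting a different inner code or re-encoding yields nothing genuinely new.
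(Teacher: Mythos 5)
Your reduction in the first paragraph and your ``$\Leftarrow$'' argument are sound and essentially parallel the paper's proof (which also transports the $\Fqm{\kB}$-structure through $\theta'^{-1}\circ\theta$ and, for the converse, picks a basis of $\Fqm{\kB}$ over $\Fqm{\ki{1}}$). Your module-theoretic setup for ``$\Rightarrow$'' is also correct as far as it goes: $W$ is a simple faithful $A$-module, $D=\mathrm{End}_A(W)=F\cap R$, and $U=W\otimes_D V$ for a $D$-linear $V\subseteq D^{\nA}$. The gap is your very last sentence. The identity $\Acodei{i}=\pi_i(W)\otimes_D V$ is fine, but the $D$-module structure on $\pi_i(W)=\Fqm{k}$ is \emph{not} the same for all $i$: $D$ sits inside $R=\prod_i\Fqm{k}$, and the $i$-th component map $\rho_i:D\to\Fqm{k}$, $d\mapsto d_i$, is a field embedding which may differ from $\rho_j$ by a power of the Frobenius. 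What you actually get is $\Acodei{i}=\Fqm{k}\cdot\rho_i(V)$, and since $\rho_j=\phi\circ\rho_i$ for some $\phi$ in the Galois group of $\Fqm{k}/\Fq$, the conclusion is only $\Acodei{j}=\phi(\Acodei{i})$ applied coordinatewise --- a Galois twist of $\Acodei{i}$, not $\Acodei{i}$ itself.

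For $k\geq 2$ this is not a repairable slip but a genuine counterexample to the literal statement. Take $q=2$, $k=\ell=2$, let $\omega$ generate $\F_4^{*}$, and let $\Acodei{2}=(\Acodei{1})^{(2)}$ be the coordinatewise square of an $\F_4$-linear code that is not Galois-stable (e.g.\ $\Acodei{1}$ spanned by $(1,\omega)$). Every map $N_{a,b,c,d}:(x,y)\mapsto(ax+by^{2},cx^{2}+dy)$ on $W=\F_4\oplus\F_4$ preserves $U=\Acodei{1}\oplus\Acodei{2}$, these maps form an algebra with centre the twisted diagonal $\{(x,y)\mapsto(ax,a^{2}y)\}\cong\F_4$, and $N_{\omega,1,\omega,\omega}$ satisfies the irreducible $X^{2}+X+\gamma_{\omega}=0$ over that centre, so the algebra contains a copy of $\F_{16}$ acting irreducibly on $W$. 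Hence $U$ is $F$-linear for this $F$ and the GCC \emph{is} an OCC, yet $\Acodei{1}\neq\Acodei{2}$. So the theorem holds only with ``equal'' weakened to ``equal up to a componentwise Frobenius'', or under the restriction $\ki{i}=1$ (where $D=\Fq$ and all $\rho_i$ coincide, as you observe). You should know that the paper's own proof stumbles at exactly the same spot: it silently treats $\tilde{\theta}=\theta'^{-1}\circ\theta$ as $\Fqm{\ki{1}}$-linear, i.e.\ as $\ext_B$ for a basis of $\Fqm{\kB}$ over $\Fqm{\ki{1}}$, which is precisely the unjustified assumption that all the embeddings $\rho_i$ agree. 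Your $D=F\cap R$ formalism is the more honest account of the obstruction; it just leads to a weaker (and correct) conclusion than the one claimed.
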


\begin{proof}
``$\Rightarrow$'': 
Let $\CGC = \Theta(\bigoplus_{i=1}^{\imax} \Acodei{i})$, with $\theta$ $\Fq$-linear, be a GC code.
Assume that $\CGC$ is an OCC.
Then, there are an $\Fq$-linear $\theta'$ and an $\Fqm{\kB}$-linear code $\Acode$ such that $\Theta'(\Acode) = \Theta\left(\bigoplus_{i=1}^{\imax} \Acodei{i}\right)$ and thus,
\begin{align*}
\Acode = \Theta'^{-1}\left(\Theta\left(\bigoplus_{i=1}^{\imax} \Acodei{i}\right)\right).
\end{align*}
Hence, $\Theta'^{-1}\left(\Theta\left(\bigoplus_{i=1}^{\imax} \Acodei{i}\right)\right)$ must be an $\Fqm{\kB}$-linear code.
The mapping $\Theta'^{-1}(\Theta(\cdot)): \bigoplus_{i=1}^{\imax} \Fqm{\ki{i}}^\nA \to \Fqm{\kB}^\nA$ is component-wise $\Fq$-linear, i.e., there is an $\Fq$-linear mapping $\tilde{\theta} :  \bigoplus_{i=1}^{\imax} \Fqm{\ki{i}} \to \Fqm{\kB}$ such that
\begin{align*}
\Theta'^{-1}\left(\Theta\left(
\a_1, \a_2, \dots, \a_\imax
\right)\right)
&:=
\Theta'^{-1}\left(\Theta\left(
\begin{bmatrix} a_{1,1} \\ a_{1,2} \\ \vdots \\ a_{1,\nA} \end{bmatrix},
\begin{bmatrix} a_{2,1} \\ a_{2,2} \\ \vdots \\ a_{2,\nA} \end{bmatrix},
\dots,
\begin{bmatrix} a_{\imax,1} \\ a_{\imax,2} \\ \vdots \\ a_{\imax,\nA} \end{bmatrix}
\right)\right) \\
&=
\begin{bmatrix}
\tilde{\theta}(a_{1,1}, \dots, a_{\imax,1}) \\
\tilde{\theta}(a_{1,2}, \dots, a_{\imax,2}) \\
\vdots \\
\tilde{\theta}(a_{\imax,\nA}, \dots, a_{\imax,\nA})
\end{bmatrix}
=:
\begin{bmatrix}
\tilde{a}_1 \\
\tilde{a}_2 \\
\vdots \\
\tilde{a}_\nA
\end{bmatrix}
\in \Fqm{\kB}^\nA
\end{align*}
for all $\a_i \in \Fqm{\ki{i}}$.
Due to $\kB = \sum_{i=1}^{\imax} \ki{i} = \sum_{i=1}^{\imax} \ki{1} = \imax \cdot \ki{1}$, $\ki{i} = \ki{1}|\kB$ and $\Fqm{\kB}$ can be seen as an extension field of $\Fqm{\ki{1}}$ with extension degree $[\Fqm{\kB} : \Fqm{\ki{1}}] = \imax$.
\begin{align*}
\begin{bmatrix}
\a_1 \\
\a_2 \\
\vdots \\
\a_\imax
\end{bmatrix}
=
\begin{bmatrix}
a_{1,1} & a_{1,2} & \dots & a_{1,\nA} \\
a_{2,1} & a_{2,2} & \dots & a_{2,\nA} \\
\vdots  & \vdots  & \ddots & \vdots    \\
a_{\imax,1} & a_{\imax,2} & \dots & a_{\imax,\nA}
\end{bmatrix}
=\ext_B\left(
\begin{bmatrix}
\tilde{a}_1 &
\tilde{a}_2 &
\dots &
\tilde{a}_\nA
\end{bmatrix}
\right).
\end{align*}
Choosing the corresponding matrix representation of $\alpha \in \Fqm{\kB}$ over $\Fqm{\ki{1}}$ as in Lemma~\ref{lem:matrix_rep}, we can write
\begin{align*}
\ext_B\left(
\alpha \cdot 
\begin{bmatrix}
\tilde{a}_1 &
\tilde{a}_2 &
\dots &
\tilde{a}_\nA
\end{bmatrix}
\right)
= \mr{\alpha} \cdot
\ext_B\left(
\begin{bmatrix}
\tilde{a}_1 &
\tilde{a}_2 &
\dots &
\tilde{a}_\nA
\end{bmatrix}
\right)
= \mr{\alpha} \cdot \begin{bmatrix}
\a_1 \\
\a_2 \\
\vdots \\
\a_\imax
\end{bmatrix}
\end{align*}
Since $\a_i \in \Acodei{i}$,
$\begin{bmatrix}
\tilde{a}_1 &
\tilde{a}_2 &
\dots &
\tilde{a}_\nA
\end{bmatrix} \in \Acode$ also $\alpha \cdot \begin{bmatrix}
\tilde{a}_1 &
\tilde{a}_2 &
\dots &
\tilde{a}_\nA
\end{bmatrix}$ must be in $\Acode$ for all $\alpha \in \Fqm{\kB}$ and thus,
$\ext_B\left(
\alpha \cdot 
\begin{bmatrix}
\tilde{a}_1 &
\tilde{a}_2 &
\dots &
\tilde{a}_\nA
\end{bmatrix}
\right)
\in \bigoplus_{i=1}^{\imax} \Acodei{i}$.
Due to Lemma~\ref{lem:matrix_rep}, we can choose $\alpha$ such that the $i$-th row of $\mr{\alpha}$ is can be an arbitrary $\begin{bmatrix} \alpha_1 & \alpha_2 & \dots & \alpha_\imax \end{bmatrix} \in \Fqm{\ki{i}}^\imax$ and thus, the $i$-th row of
$\ext_B\left(
\alpha \cdot 
\begin{bmatrix}
\tilde{a}_1 &
\tilde{a}_2 &
\dots &
\tilde{a}_\nA
\end{bmatrix}
\right)$
is $\sum_{j=1}^{\imax} \alpha_i \a_i$.
This implies that $\Acodei{j} \subseteq \Acodei{i}$ for all $j,i=1,\dots,\imax$ and thus all outer codes $\Acodei{i}$ are the same.

``$\Leftarrow$'' :
If $\Acodei{j} = \Acodei{i}$ for all $i,j$, we can choose any basis $B$ of $\Fqm{\kB}$ over $\Fqm{\ki{i}}$.
Since multiplying elements of $\ext_B^{-1}(\bigoplus_{i=1}^{\imax} \Acodei{i})$ by $\Fqm{\kB}$ scalars corresponds to a left multiplication by a matrix in $\Mqm{\kB}$, and any $\Fqm{\ki{i}}$-linear combination of elements of different $\Acodei{i}$'s again is contained in any $\Acodei{i}$, the set $\Acode := \ext_B^{-1}(\bigoplus_{i=1}^{\imax} \Acodei{i})$ is an $\Fqm{\kB}$-linear code.
The $\Fq$-linear map is given by $\theta \circ \ext_B$.
\end{proof}

Theorem~\ref{thm:structure_OCC} provides an exact statement which GC codes can be constructed as OC codes.
In particular, it shows that only for very few choices of the outer codes $\Acode^{(i)}$, we obtain an OC code.
E.g., it follows directly that the dimensions of the outer codes must all be the same, which would correspond to rather suboptimal GC codes (cf. Appendix~\ref{app:GCC}).

Hence, we see that the set of linear GC codes has a much larger cardinality than the set of linear OC codes.
Note, that Theorem~\ref{thm:structure_OCC} can be used to practically estimate the number of linear GC codes which Alice can choose from, namely by counting the possible choices of outer codes $\Acode^{(i)}$.

It is an open problem to prove a similar statement as in Theorem~\ref{thm:structure_OCC} for the general case where the $\ki{i}$'s are not all the same.

\section{Applying Sendrier's Attack}
\label{sec:sendrier}

Sendrier's attack \cite{sendrier1995structure,sendrier1998concatenated} was proposed to find the structure of a concatenated code from a given obfuscated generator matrix.
In this section, we deal with the question how we have to modify the attack to work also with GC codes.
We generally divide Sendrier's attack into three steps, where the first two try to revert the permutations done to the generator matrix up to a certain level, and the third step attempts to find possible generator matrices of the inner and outer codes.

\subsection{First Step}
\label{subsec:sendrier1}

The first step aims to find the inner blocks of a GC code, which are the positions corresponding to the same set $\Bcode$ in $\c \in \CGC \subseteq \Bcode^\nA$.
In order to describe a procedure accomplishing the task, we need to give some definitions.

\begin{definition}
The following definitions are necessary \cite[Definition~9-11]{sendrier1998concatenated}
\begin{itemize}
\item The \emph{support} of a vector $\c$ is given as $\supp(\c) = \{i : c_i \neq 0\}$.
\item The \emph{support of a set} is the union of the supports of its elements.
\item A codeword $\c \in \Ccode$ is called \emph{minimal support codeword} if there is no other codeword $\c' \in \Ccode$ with $\supp(\c') \subseteq \supp(\c)$.
\item The set of all minimal support codewords in $\Ccode$ is called $\Pcode(\Ccode)$
\item Two vectors $\c,\c'$ are called \emph{connected} if their supports intersect.
\item Positions $i,j$ are connected in a set $S \in \Ccode$ if there is a sequence words $\c_1,\dots,\c_r \in S$ such that
\begin{itemize}
\item $i \in \supp(\c_1)$ and $j \in \supp(\c_r)$
\item $\c_k$ and $\c_{k+1}$ are connected $\forall$ $k=1,\dots,r-1$.
\end{itemize}
\item A set $S \in \Ccode$ \emph{connects a set of positions} $I$ if any two elements in $I$ are connected in $S$.
\end{itemize}
\end{definition}
As in \cite{sendrier1998concatenated}, we can formally define an inner block as
\begin{definition}
The $i$-th \emph{inner block} of a GCC is the support $\supp(\{\Theta(a \cdot \e_i) : a \in \Fqm{\kB}\})$, where $\e_i$ is the $i$-th unit vector.
\end{definition}

Let now $\CGC$ be a given GCC.
Similar to the statement of \cite[Proposition~16]{sendrier1998concatenated}, we can prove the following.

\begin{theorem}\label{thm:step1thm1}
The support of every $\c \in \Pcode(\CGCdual)$ with $\wtH(\c)< \boundDdual$ is contained in a single inner block of $\CGC$.
\end{theorem}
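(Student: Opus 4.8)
The plan is to prove the contrapositive: I will show that \emph{any} $\c \in \CGCdual$ whose support meets more than one inner block satisfies $\wtH(\c) \geq \boundDdual$. Write $\c = (\c_1,\dots,\c_\nA)$ with $\c_j \in \Fq^\nB$ the restriction of $\c$ to the $j$-th inner block, so $\wtH(\c) = \sum_j \wtH(\c_j)$, and call block $j$ \emph{active} if $\c_j \neq \vec{0}$; the hypothesis means at least two blocks are active. The first ingredient is that $\CGC \subseteq \Bcode^\nA$ forces $(\Bcode\dual)^\nA = (\Bcode^\nA)\dual \subseteq \CGCdual$, so any word that is block-wise in $\Bcode\dual$ lies in $\CGCdual$. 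I split the active blocks accordingly: block $j$ is of \emph{type~A} if $\c_j \in \Bcode\dual$, and of \emph{type~B} otherwise. The second ingredient extracts the outer information: for fixed $j$ the map $(a_1,\dots,a_\imax) \mapsto \langle \c_j, \theta(a_1,\dots,a_\imax)\rangle$ is $\Fq$-linear on $\bigoplus_i \Fqm{\ki{i}}$, hence equals $\sum_{i=1}^{\imax}\mathrm{Tr}_{\Fqm{\ki{i}}/\Fq}(\mu_{i,j}\,a_i)$ for unique $\mu_{i,j}\in\Fqm{\ki{i}}$, which I collect into $\vec{\mu}^{(i)} = (\mu_{i,1},\dots,\mu_{i,\nA}) \in \Fqm{\ki{i}}^{\nA}$.

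Next I would show $\vec{\mu}^{(i)} \in \Acodei{i}\dual$. Setting all outer codewords except the $i$-th to zero, orthogonality of $\c$ to $\CGC$ gives $\sum_j \mathrm{Tr}(\mu_{i,j}a_{j}) = 0$ for every $\a=(a_1,\dots,a_\nA)\in\Acodei{i}$. Since $\Acodei{i}$ is $\Fqm{\ki{i}}$-linear, $\beta\a\in\Acodei{i}$ for all $\beta\in\Fqm{\ki{i}}$, so $\mathrm{Tr}\big(\beta\,\langle\vec{\mu}^{(i)},\a\rangle\big)=0$ for all $\beta$; nondegeneracy of the trace form then forces $\langle\vec{\mu}^{(i)},\a\rangle=0$, i.e. $\vec{\mu}^{(i)}\in\Acodei{i}\dual$. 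For the bookkeeping, $\mu_{i,j}\neq 0$ forces $\c_j\neq\vec{0}$ and $\c_j\notin\Bcode\dual$ (else the functional would vanish), so $\supp(\vec{\mu}^{(i)})$ sits inside the set $J_B$ of type-B blocks; conversely each type-B block carries some $\mu_{i,j}\neq0$, whence $J_B=\bigcup_i\supp(\vec{\mu}^{(i)})$.

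I would finish by a weight count, splitting on the number of type-A blocks. If there are at least two type-A blocks, each contributes a nonzero codeword of $\Bcode\dual$, hence weight at least $\dB\dual$, giving $\wtH(\c)\geq 2\dB\dual \geq \boundDdual$. Otherwise there is at most one type-A block, and since at least two blocks are active there is at least one type-B block; then some $\vec{\mu}^{(i_0)}\neq\vec{0}$, so $|J_B|\geq\wtH(\vec{\mu}^{(i_0)})\geq\dAi{i_0}\dual$, and as every type-B block is active, $\wtH(\c)\geq|J_B|\geq\dAi{i_0}\dual\geq\min_i\dAi{i}\dual\geq\boundDdual$. In both cases $\wtH(\c)\geq\boundDdual$, which is the claim. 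Notice that minimality of $\c$ is not actually required for this threshold; it only serves to exclude the two-type-A-block configuration (a sum of inner pieces in separate blocks), which would let one sharpen the bound to $\min_i\dAi{i}\dual$.

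The step I expect to be the crux is the implication $\sum_j\mathrm{Tr}(\mu_{i,j}a_{j})=0\ \text{on}\ \Acodei{i}\ \Rightarrow\ \vec{\mu}^{(i)}\in\Acodei{i}\dual$: this is precisely where the $\Fqm{\ki{i}}$-linearity of the outer codes (not just $\Fq$-linearity) enters, and it is what makes the outer dual distances $\dAi{i}\dual$ surface in the bound. The only other delicate point is the support bookkeeping identifying $J_B$ with $\bigcup_i\supp(\vec{\mu}^{(i)})$ and confirming that distinct active blocks contribute disjoint coordinates to $\wtH(\c)$; both are routine once the block decomposition and the trace representation of $\theta$ are fixed.
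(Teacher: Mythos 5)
Your argument is correct, and it reaches the conclusion by a genuinely different route than the paper. The paper argues forward: the support of $\c$ lies in $r\le\wtH(\c)$ inner blocks, and the hypothesis $\wtH(\c)<\dAi{i}\dual$ together with Lemma~\ref{lem:from_Macwilliams} shows that each $\Acodei{i}$ projects onto all of $\Fqm{\ki{i}}^r$ on those coordinates, hence $\CGC$ restricted to the $r$ blocks is all of $\Bcode^r$; therefore $\c$ is blockwise in $\Bcode\dual$, each active block costs at least $\dB\dual$, and $\wtH(\c)<2\cdot\dB\dual$ forces $r<2$ by pigeonhole. In that proof the outer dual distances enter only through the weight hypothesis; in yours they enter through genuine nonzero dual codewords $\vec{\mu}^{(i)}\in\Acodei{i}\dual$ produced by the trace/functional (Delsarte-style) decomposition of $\c$ on each block, with the $\Fqm{\ki{i}}$-linearity of the outer codes doing exactly the work you identify as the crux. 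What your route buys is an unconditional structural description of $\CGCdual$ --- every dual word splits into blocks lying in $\Bcode\dual$ and blocks indexed by the joint support of the $\vec{\mu}^{(i)}$ --- from which the threshold $\boundDdual$ falls out of a two-case weight count; it also makes explicit that minimality is never used (true, though unremarked, of the paper's proof as well), and that for genuinely minimal-support words the bound sharpens to $\min_i\dAi{i}\dual$, since an active block with $\c_j\in\Bcode\dual$, padded with zeros, is itself a word of $\CGCdual$ of strictly smaller support. The paper's proof is shorter and reuses a lemma it already needs elsewhere; yours is somewhat longer but yields more information and a slightly stronger statement.
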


\begin{proof}
Let $\c \in \Pcode(\CGCdual)$ with $\wtH(\c) < \boundDdual$.
Then the support of $\c$ is contained in $r \leq \wtH(\c)$ inner blocks.
We want to show that $r < 2$.

Since $\wtH(\c) < \dAi{i}\dual$, the positions of $\Acodei{i}$ corresponding to these inner blocks can be chosen arbitrarily from $\Fqm{\ki{i}}^r$ due to Lemma~\ref{lem:from_Macwilliams}.
Thus, the positions of $\bigoplus_{i=1}^{\ell} \Acodei{i}$ corresponding to these inner blocks can be chosen arbitrarily from $\bigoplus_{i=1}^{\ell} \Fqm{\ki{i}}^r$.

Due to $\theta(\bigoplus_{i=1}^{\ell} \Fqm{\ki{i}}) = \Bcode$, $\CGC$ contains codewords that have any element of $\Bcode^r$ in the $r$ inner blocks.
Any element of $\CGCdual$ with support contained in these $r$ inner blocks must have codewords of $\Bcode\dual$ in all its inner blocks because for any of the $r$ inner blocks $j$, one can construct a codeword that has an arbitrary element of $\Bcode$ in inner block $j$ and the zero codewords in the other $r-1$ blocks.
Hence, $r < 2$ due to $\wtH(\c)< 2 \cdot \dB\dual$ and the pigeonhole principle.
\end{proof}

The following statement is taken from \cite{sendrier1998concatenated}.
\begin{lemma}{\cite[Proposition~15]{sendrier1998concatenated}}\label{lem:step1lem1}
Let $\Ccode$ be a linear code. $\Pcode(\Ccode)$ connects $\supp(\Ccode)$ iff $\Ccode$ is not the direct sum of two disjoint support codes.
\end{lemma}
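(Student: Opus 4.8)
The plan is to prove the logically equivalent statement that $\Pcode(\Ccode)$ fails to connect $\supp(\Ccode)$ exactly when $\Ccode$ splits as a direct sum of two codes with disjoint supports. First I would reformulate connectivity as a graph: put an edge between positions $i,j \in \supp(\Ccode)$ whenever some minimal support codeword contains both in its support. Then the relation ``connected in $\Pcode(\Ccode)$'' from the definition is precisely being in the same connected component of this graph, so that ``$\Pcode(\Ccode)$ connects $\supp(\Ccode)$'' means the graph is connected, and every single minimal support codeword places all of its support positions into one component. The crucial structural fact I would establish next is that the minimal support codewords span $\Ccode$: given any nonzero $\c \in \Ccode$ that is not of minimal support, there is a nonzero $\c' \in \Ccode$ with $\supp(\c') \subsetneq \supp(\c)$, and subtracting a suitable $\Fq$-multiple of $\c'$ cancels at least one coordinate and strictly shrinks the support, so an induction on the support size expresses $\c$ as an $\Fq$-linear combination of minimal support codewords. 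This is the one place where genuine work is needed; the two implications below are bookkeeping on top of it.

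For the direction ``$\Ccode$ is a direct sum $\Rightarrow$ $\Pcode(\Ccode)$ does not connect $\supp(\Ccode)$'', suppose $\Ccode = \Ccode_1 \oplus \Ccode_2$ with $\supp(\Ccode_1) \cap \supp(\Ccode_2) = \emptyset$. I would show every $\c \in \Pcode(\Ccode)$ lies wholly in one summand: writing $\c = \c_1 + \c_2$ with $\c_j \in \Ccode_j \subseteq \Ccode$, if both were nonzero then $\c_1 \in \Ccode$ would satisfy $\supp(\c_1) \subsetneq \supp(\c)$, contradicting minimality of the support of $\c$. Consequently no minimal support codeword has support meeting both $\supp(\Ccode_1)$ and $\supp(\Ccode_2)$, so no edge of the graph crosses this partition and the two parts lie in different components; hence $\Pcode(\Ccode)$ does not connect $\supp(\Ccode)$.

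For the converse, assume $\Pcode(\Ccode)$ does not connect $\supp(\Ccode)$, so the graph has at least two components; choose a nontrivial partition $\supp(\Ccode) = I_1 \sqcup I_2$ into nonempty unions of components. Every minimal support codeword is supported inside a single component, hence inside $I_1$ or inside $I_2$. Letting $\Ccode_j := \{\c \in \Ccode : \supp(\c) \subseteq I_j\}$, the spanning fact gives $\Ccode = \Ccode_1 + \Ccode_2$, and since $I_1$ and $I_2$ are disjoint a vector lying in both parts must vanish, so the sum is direct and the supports are disjoint, exhibiting $\Ccode$ as the desired direct sum. The main obstacle is the spanning lemma, and the only subtlety to watch throughout is matching the paper's definition of a minimal support codeword (read as: no \emph{other nonzero} $\c'$ with proper inclusion $\supp(\c') \subsetneq \supp(\c)$), which is exactly what makes the restriction arguments in both implications go through.
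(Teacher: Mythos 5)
The paper offers no proof of this lemma at all: it is imported verbatim as Proposition~15 of Sendrier's 1998 paper, so there is nothing internal to compare your argument against. Judged on its own, your proof is correct and is essentially the standard argument for this fact. The load-bearing step is exactly where you place it, namely the spanning lemma that $\Pcode(\Ccode)$ generates $\Ccode$, proved by cancelling a coordinate against a codeword of strictly smaller support and inducting on $|\supp(\c)|$; once that is in place, the forward direction is the observation that a minimal support codeword cannot have nonzero components in both summands of a disjoint-support decomposition, and the converse packages the connected components into subcodes $\Ccode_j = \{\c \in \Ccode : \supp(\c)\subseteq I_j\}$ whose sum is all of $\Ccode$ and whose intersection is trivial. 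Two small points you leave implicit and should state: in the forward direction the conclusion requires both summands to be nonzero (equivalently, both supports nonempty), which is the intended reading of ``direct sum of two disjoint support codes''; and in the converse you should note that each $\Ccode_j$ is itself nonzero, which follows because $I_j$ is a nonempty subset of $\supp(\Ccode)$ and, by the spanning lemma, every position of $\supp(\Ccode)$ lies in the support of some minimal support codeword. You are also right to flag that the paper's literal definition of ``minimal support codeword'' (no \emph{other} codeword $\c'$ with $\supp(\c')\subseteq\supp(\c)$) must be read as ``no nonzero codeword with strictly smaller support,'' since otherwise $\Pcode(\Ccode)$ would be empty and the statement vacuous.
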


Using Theorem~\ref{thm:step1thm1} and Lemma~\ref{lem:step1lem1}, we can prove the following corollary.

\begin{corollary}\label{cor:step1cor1}
If $\Bcode$ is not the direct sum of two disjoint support codewords, the set of elements of $\Pcode(\CGC\dual)$ with weight $< \boundDdual$ connects each inner block of $\CGC$.
\end{corollary}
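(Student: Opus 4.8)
The plan is to reduce the statement to a single inner block and there recognise the relevant codewords as those of $\Bcode\dual$, so that Lemma~\ref{lem:step1lem1} applies. First I would invoke Theorem~\ref{thm:step1thm1}: every $\c \in \Pcode(\CGC\dual)$ with $\wtH(\c) < \boundDdual$ has support inside one inner block. Consequently none of the codewords in our set links two different blocks, and it suffices to show that, for each fixed inner block $I$, these codewords connect all positions of $I$.

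Next I would identify the codewords of $\CGC\dual$ whose support lies inside a fixed block $I$ (the $i$-th copy of $\Bcode$). Restricting $\CGC$ to the coordinates of $I$ yields all of $\Bcode$: a single outer coordinate can be prescribed freely (by Lemma~\ref{lem:from_Macwilliams} applied to each outer code, since one position is fewer than $\dAi{j}\dual$) and $\theta$ is onto $\Bcode$. Hence a vector supported on $I$ is orthogonal to $\CGC$ precisely when its restriction to $I$ lies in $\Bcode\dual$, so the codewords of $\CGC\dual$ supported in $I$ are exactly $\Bcode\dual$, placed in block $i$. Because any codeword of $\CGC\dual$ with strictly smaller support than such a word is again supported in $I$, the minimal support codewords of $\CGC\dual$ contained in $I$ coincide with $\Pcode(\Bcode\dual)$.

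To apply Lemma~\ref{lem:step1lem1} I would transfer the hypothesis from $\Bcode$ to $\Bcode\dual$. The property of being a direct sum of two disjoint-support codes is stable under duality: a block-diagonal generator matrix dualises to a block-diagonal one, and a zero coordinate of $\Bcode$ corresponds to a unit-vector summand of $\Bcode\dual$. Thus $\Bcode$ not being such a direct sum forces the same for $\Bcode\dual$ on $\supp(\Bcode\dual) = I$, and Lemma~\ref{lem:step1lem1} yields that $\Pcode(\Bcode\dual)$ connects $I$.

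The step needing the most care is the weight restriction. Lemma~\ref{lem:step1lem1} produces connectivity using all of $\Pcode(\Bcode\dual)$, whereas the claim only admits minimal support codewords of weight $< \boundDdual \le 2\dB\dual$, and in general $\Bcode\dual$ may contain minimal support codewords of weight $\ge 2\dB\dual$. The hard part will therefore be to argue that a connecting subfamily can already be found among the bounded-weight minimal support codewords, for instance by choosing connecting words of near-minimal weight or by exploiting additional structure of $\Bcode$. Once this reconciliation is in place, combining it with the block-localisation of Theorem~\ref{thm:step1thm1} gives the corollary.
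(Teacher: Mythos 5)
Your route is the same as the paper's: localise the bounded-weight minimal support codewords to single inner blocks via Theorem~\ref{thm:step1thm1}, identify them inside a block with minimal support codewords of $\Bcode\dual$, and invoke Lemma~\ref{lem:step1lem1} to get connectivity of the block. You are in fact more careful than the paper on two points it glosses over: you justify why the codewords of $\CGC\dual$ supported in one block are exactly an embedded copy of $\Bcode\dual$ and why minimality is preserved under this identification, and you notice that Lemma~\ref{lem:step1lem1} must be applied to $\Bcode\dual$ rather than to $\Bcode$, so the indecomposability hypothesis has to be transferred across duality. The paper simply writes that ``$\Pcode(\Bcode)$ connects $\supp(\Bcode)$'' and moves on, leaving that transfer implicit.

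The difficulty you flag at the end is genuine, and the paper does not resolve it either. The set in the corollary only contains minimal support codewords of weight $< \boundDdual \leq 2\dB\dual$, so inside a block one only has those elements of $\Pcode(\Bcode\dual)$ of weight in $[\dB\dual, 2\dB\dual)$, whereas Lemma~\ref{lem:step1lem1} guarantees connectivity only for the full set $\Pcode(\Bcode\dual)$, which may contain minimal support codewords of weight $\geq 2\dB\dual$. The paper's proof ends with ``if we find enough $\c \in \Pcode(\CGC\dual)$ with weight $< \boundDdual$, we obtain the supports of all inner blocks'', which tacitly assumes that the bounded-weight subfamily already suffices. So your proposal is incomplete exactly where the paper's own argument is incomplete; closing the gap would require either showing that the low-weight minimal support codewords of $\Bcode\dual$ already connect $\supp(\Bcode\dual)$ for the inner codes of interest, or weakening the corollary to a statement about the blocks that such codewords do reach.
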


\begin{proof}
Due to Theorem~\ref{thm:step1thm1}, every minimal support vector $\c \in \Pcode(\CGC\dual)$ with weight $< \boundDdual$ is contained in a single inner block.
This imples, that $\c$, restricted to this inner block, is in a minimal support vector of $\Pcode(\Bcode\dual)$.
Since $\Bcode$ is not the direct sum of two disjoint support codes, by Lemma~\ref{lem:step1lem1}, $\Pcode(\Bcode)$ connects $\supp(\Bcode)$, which corresponds to the entire inner block.
Hence, if we find enough $\c \in \Pcode(\CGC\dual)$ with weight $< \boundDdual$, we obtain the supports of all inner blocks.
\end{proof}

Corollary~\ref{cor:step1cor1} gives us the tools for finding the supports of the inner blocks of $\CGC$.
We simply exploit \cite[Propositions~13 and 14]{sendrier1998concatenated} to find as many minimal support codewords as necessary to identify the inner blocks, as described in \cite{sendrier1998concatenated}.
However, the sufficient condition that this method works is a bit more strict compared to the OC case since we can only use minimal support words of weight $< \boundDdual$.

If the method works, we obtain the supports of the inner blocks from which we can construct a permutation matrix $\Pmatrix_{\mathrm{Step~1}}$ which re-orders the columns of $\Gpub$ such that columns corresponding to the same inner bock are grouped together, i.e., form a $\kGC \times \nB$ submatrix of $\Gpub \cdot \Pmatrix_{\mathrm{Step~1}}$.

\subsection{Second Step}
\label{subsec:sendrier2}

Codewords of $\CGC$ are elements of $\Bcode^\nA$, exactly as codewords of OC codes.
In Section~\ref{subsec:sendrier1}, we saw how to identify the inner blocks of the code, i.e. the positions that correspond to the same $\Bcode$ in $\Bcode^\nA$.
However, in order to identify the structure of the code, we also need to know the permutations between the inner blocks, i.e., we want to re-order the positions such that we obtain a codeword in $[\sigma(\Bcode)]^\nA$, where $\sigma(\Bcode) = \{\sigma(\c) := (c_{\sigma(1)}, c_{\sigma(2)}, \dots, c_{\sigma(n)} : \c \in \Bcode\}$.

This part of Sendrier's attack only depends on properties of the inner code $\Bcode$.
To be exact, Sendrier uses the $i$-th \emph{signature} of a code $\Bcode$ \cite{sendrier1998concatenated}, which is the weight distribution of $\Bcode$ punctured at position $i$, to identify the permutations between two codes $\Bcode$ and $\Bcode'$.
Thus, it is directly applicable to GC codes.

Using this method, it is possible to extract the relative permutations of the different inner blocks and to reorder them to be in the same order as one specific block, which is permuted from the original code $\Bcode$ by some permutation $\sigma$.
It is mentioned in \cite{sendrier1995structure} that this part of the attack only works if the automorphism group of $\Bcode$ is reduced to the identity element, which, if this condition is not fulfilled would yield a bad overall code.

Figure~\ref{fig:P_illustration} illustrates how the first two steps of Sendrier's attack recover the structure of the permutation matrix $\Pmatrix$ used in the obfuscated generator matrix $\Gpub = \Smatrix \cdot \Gsec \cdot \Pmatrix$.
Here, $\Pmatrix_{\text{Step 1}}$ and $\Pmatrix_{\text{Step 2}}$ denote the matrices that we obtain by Steps~1 and 2 respectively and which we can multiply to $\Gpub$ from the right to structure the inner blocks.
The $\Pmatrix_{i,j}$'s are $\nB \times \nB$ submatrices of a permutation matrix and the $\Pmatrix_i$'s are $\nB \times \nB$ permutation matrices.
$\Pmatrix_1$ is the permutation matrix that transforms $\Bcode$ into $\sigma(\Bcode)$.

\begin{figure}[h]
\centering
\resizebox{0.9\textwidth}{!}{
\centering
\begin{tikzpicture}
\def\ya{-1.5}
\def\xa{5}
\def\xb{10}
\def\xaa{\xa*0.57}
\def\xbb{\xb-2.5}
\def\yaa{-1.25}
\node at (0,0) {$\Gpub = \Smatrix \cdot \Gsec \cdot $};
\node at (0,\ya) {$\begin{bmatrix}
\Pmatrix_{1,1} & \Pmatrix_{1,2} & \Pmatrix_{1,3} & \cdots & \Pmatrix_{1,\nA} \\
\Pmatrix_{2,1} & \Pmatrix_{2,2} & \Pmatrix_{2,3} & \cdots & \Pmatrix_{2,\nA} \\
\Pmatrix_{3,1} & \Pmatrix_{3,2} & \Pmatrix_{3,3} & \cdots & \Pmatrix_{3,\nA} \\
\vdots & \vdots & \vdots & \ddots & \vdots \\
\Pmatrix_{\nA,1} & \Pmatrix_{\nA,2} & \Pmatrix_{\nA,3} & \cdots & \Pmatrix_{\nA,\nA}
\end{bmatrix}$};
\node at (\xaa,\yaa) {$\overset{\text{Step 1}}{\longrightarrow}$};
\node at (\xa,0) {$\Gpub \cdot \Pmatrix_{\text{Step 1}} = \Smatrix \cdot \Gsec \cdot$};
\node at (\xa,\ya) {$\begin{bmatrix}
\Zmatrix & \Pmatrix_{2} & \Zmatrix & \cdots & \Zmatrix \\
\Zmatrix & \Zmatrix & \Zmatrix & \cdots & \Pmatrix_{\nA} \\
\Pmatrix_{1} & \Zmatrix & \Zmatrix & \cdots & \Zmatrix \\
\vdots & \vdots & \vdots & \ddots & \vdots \\
\Zmatrix & \Zmatrix & \Pmatrix_{3} & \cdots & \Zmatrix
\end{bmatrix}$};
\node at (\xb,0) {$\Gpub \cdot \Pmatrix_{\text{Step 1}} \cdot \Pmatrix_{\text{Step 2}} = \Smatrix \cdot \Gsec \cdot$};
\node at (\xbb,\yaa) {$\overset{\text{Step 2}}{\longrightarrow}$};
\node at (\xb,\ya) {$\begin{bmatrix}
\Zmatrix & \Pmatrix_{1} & \Zmatrix & \cdots & \Zmatrix \\
\Zmatrix & \Zmatrix & \Zmatrix & \cdots & \Pmatrix_{1} \\
\Pmatrix_{1} & \Zmatrix & \Zmatrix & \cdots & \Zmatrix \\
\vdots & \vdots & \vdots & \ddots & \vdots \\
\Zmatrix & \Zmatrix & \Pmatrix_{1} & \cdots & \Zmatrix
\end{bmatrix}$};
\end{tikzpicture}
}
\caption{Illustration of permutation recovery in Steps~1 and 2 of Sendrier's attack.}
\label{fig:P_illustration}
\end{figure}

Note, that after applying Step~2, the effective permutation matrix $\Gpub \cdot \Pmatrix_{\text{Step 1}} \cdot \Pmatrix_{\text{Step 2}}$ is still in a form in which inner blocks are permuted among each other and within the inner blocks, positions are permuted.
However, the first kind of permutation simply corresponds to a permutation of the outer codes (all the same) and the latter is a permutation of the inner code.

Thus, we recovered the permutations such that $\Gpub \cdot \Pmatrix_{\text{Step 1}} \cdot \Pmatrix_{\text{Step 2}}$ is a generator matrix of a GC code with outer codes equivalent to the original outer codes (equivalent by the same permutation $\tau$) and with the inner code (or equivalently, the image of $\theta$) being permuted by $\sigma$.

\subsection{Third Step}
\label{subsec:sendrier3}

The subsequent steps are applied after obtaining the permutation matrices $\Pmatrix_{\text{Step~1}}$ and $\Pmatrix_{\text{Step~2}}$ of the first two steps of Sendrier's attack.
This step, we subdivide into two Substeps~3.1 and 3.2.

\subsubsection*{Step 3.1}

By transforming the matrix $\Gpub \cdot \Pmatrix_{\text{Step 1}} \cdot \Pmatrix_{\text{Step 2}}$  into reduced row echelon form, the first $k_B \times n_B$ submatrix is a generator matrix of a permuted version $\sigma(\Bcode)$ of the code $\Bcode$.
This part of Sendrier's attack is directly applicable to GC codes since the first $k_B \times n_B$ submatrix of the reduced row echelon form of $\Gpub \cdot \Pmatrix_{\text{Step 1}} \cdot \Pmatrix_{\text{Step 2}}$ is a basis of the row space $\Vspace$ of $\Gpub \cdot \Pmatrix_{\text{Step 1}} \cdot \Pmatrix_{\text{Step 2}}$, restricted to the first $\nB$ columns.
This subspace equals the span of all codewords restricted to an inner block $j$, permuted by a permutation $\sigma$, and thus the image of 
\begin{align*}
\sigma\left(\theta\left(\left\{\begin{bmatrix} a_{j,1} & \dots a_{j,\imax} \end{bmatrix} : a_{j,i} \text{ is $j$-th position of } \a_i \in \Acodei{i}\right\}\right)\right).
\end{align*} 
Due to Theorem~\ref{lem:from_Macwilliams}, it holds that
\begin{align*}
\left\{\begin{bmatrix} a_{j,1} & \dots a_{j,\imax} \end{bmatrix} : a_{j,i} \text{ is $j$-th position of } \a_i \in \Acodei{i}\right\} = \bigoplus_{i=1}^{\imax} \Fqm{\ki{i}}
\end{align*}
and we obtain
\begin{align*}
\Vspace = \sigma\left(\theta\left(\bigoplus_{i=1}^{\imax} \Fqm{\ki{i}}\right)\right) = \sigma\left(\Bcode\right).
\end{align*}

\subsubsection*{Step 3.2}

The remaining part of the third step of Sendrier's attack on OC codes is responsible for obtaining the structure of the outer code up to a permutation and a \emph{Frobenius field automorphism} applied component-wise \cite{sendrier1998concatenated}.
This method works because the generator matrix of obtained by row-reducing $\Gpub \cdot \Pmatrix_{\text{Step 1}} \cdot \Pmatrix_{\text{Step 2}}$ is highly structured in the OC case.

However, it remains an open problem whether similar arguments can be used to find ways of utilizing the structure of this matrix in the GC case.
Due to time restrictions, we were not able to translate the arguments used by Sendrier to GC codes.

\section{Alternatives to Parts of Sendrier's Attack}
\label{sec:alternatives}

In Section~\ref{sec:sendrier}, we saw that parts of Sendrier's structural attack on OC codes can be applied to GCC directly.
However, it remains an open problem to recover a complete structure of the code.
Also, the steps of the attacks are not always guaranteed to work, cf. sufficient conditions in each step.
Thus, we are interested in replacing as many parts of Sendrier's attack as possible by an alternative.

\subsection{Sendrier's Second and First Part of Third Step}
\label{subsec:alt3.1}

Assume that Step 1 of Sendrier's attack was successful and we obtained the permutation matrix $\Pmatrix_{\mathrm{Step 1}}$ such that the code generated by the matrix $\Gpub \cdot \Pmatrix_{\mathrm{Step 1}}$ is a subset of
\begin{align*}
\bigoplus\limits_{i=1}^{\nA} \sigma_i(\Bcode),
\end{align*}
i.e., we know which positions correspond to the same inner block, although the blocks are in a different order than in the original code and also within the blocks, positions are permuted arbitrarily (by a permutation $\sigma_i$).
Thus, we also know which positions of $\r$ correspond to the same inner blocks (by computing $\r \cdot \Pmatrix_{\mathrm{Step 1}}$).

We can find generator matrices $\GBi{i}$ of all codes given by the positions of the $i$-th inner block by the following method, which is similar to Sendrier's Step~3.1, cf. Section~\ref{subsec:sendrier3}, but more general:
Restrict $\Gpub$ to the columns corresponding to the $i$-th inner block and just extract a linearly independent subset vectors in the row span, e.g. by Gaussian elimination in $O(\nB^3)$ time.
This method works because the positions of the outer codes corresponding to the $j$-th inner block attain all values of $\bigoplus_{i=1}^{\imax} \Fqm{\ki{i}}$, cf. Lemma~\ref{lem:from_Macwilliams} with $\kAi{i}<\nA$ for all $i$, and thus, the span of the rows of $\Gpub$ restricted to the $j$-th inner block are equal to $\theta(\bigoplus_{i=1}^{\imax} \Fqm{\ki{i}})=\Bcode$, with positions permuted by the permutation $\sigma_i$.

The advantage of this approach is that the second step of Sendrier's attack is not required.
Also, we can replace $\sigma_i(\Bcode)$ by and code $\Bcode_i$ and the method will still work.
This fact is important in Section~\ref{sec:preventing}.

\subsection{Non-Structural Attack}
\label{subsec:nonstruct}

Let $\r = \m \Gpub + \e$.
In this section, we present an attack that does not find a structure of the code, but is able to recover the message $\m$ from the received word $\r$ if not too many errors $\wtH(\e)$ occurred.
Such attacks are called non-structural.
The method works for both OC and GC codes.

We need to know the positions of the inner blocks, which we e.g. obtain by Step~1 of Sendrier's attack.
Thus, we also know which positions of $\r$ correspond to the same inner blocks (by compuing $\r \cdot \Pmatrix_{\mathrm{Step~1}}$).
The number of errors is not changed by this operation since $\wtH(\e \cdot \Pmatrix_{\mathrm{Step~1}}) = \wtH(\e)$.
Also, we require that either Step~3.1 of Sendrier's attack or our alternative presented in Section~\ref{subsec:alt3.1} worked.
This means that we know generator matrices $\GBi{i}$ of the codes in the inner blocks~$i$.

We can divide our non-structural attack into two parts:

\subsubsection{Part 1}
\label{subsec:part1}

The goal of this part is to decode the positions of $\r$ that correspond to the $i$-th inner block of $\CGC$ in the code $\sigma_i(\Bcode)$.
In general, there are several possible methods to decode in $\sigma_i(\Bcode)$, e.g.,
\begin{itemize}
\item If we know a structural attack on the McEliece cryptosystem using the inner code $\Bcode$, we can use this attack in combination with the known generator matrix $\GBi{i}$ of $\sigma_i(\Bcode)$ to obtain an efficient decoder of the code $\sigma_i(\Bcode)$ for any $i$.
Since $\Bcode$ has much smaller length than the entire code $\CGC$ (in most cases, $\nB \in \Theta(\sqrt{\nGC})$), such structural attacks have much smaller work factors than direct attacks on a code of length $\nGC$.
\item We can apply the information set decoder described in Section~\ref{sec:mceliece} on the generator matrix $\GBi{i}$ and the corresponding part of $\r$. Due to Theorem~\ref{thm:isd}, the attack finds the correct part of the codeword $\c = \r-\e$ if the number of errors in this block does not exceed half the minimum distance of the inner code. Otherwise, the decoding result is wrong (another codeword is found) or decoding fails (e.g. after some finite time without result, one aborts the algorithm).
\end{itemize}
In both cases, we can\footnote{If the obtained algorithms are better, i.e., can correct more than half the minimum distance of errors, we can simply declare a decoding failure if the distance of codeword to received word is greater than half the minimum distance.} obtain decoders that find a codeword $\tilde{\vec c}_i$ from a received word $\r_i=\c_i + \e_i$ if and only if $\wtH(\r_i - \tilde{\vec c}_i)<\tfrac{\dB}{2}$, where $\c_i \in \sigma_i(\Bcode)$ is the part of $\c \cdot \Pmatrix_{\mathrm{Step~1}}$ corresponding to the $i$-th inner block.
This type of decoder is called \emph{bounded minimum distance} decoder, cf. \cite{bossert1999}.

If $\c_i = \tilde{\vec c}_i$, we say that decoding is \emph{correct}. If $\c_i \neq \tilde{\vec c}_i$ decoding is \emph{wrong} and if the decoder does not have a result, decoding \emph{failed}.
Suppose that $\ncorrect$ inner blocks were correctly and $\nwrong$ were wrongly decoded, and in $\nfail$ inner blocks, decoding failed.

\subsubsection{Part 2}
\label{subsec:part2}

The second part of our non-structural attack can be seen as a speed-up of the information set decoding attack (cf. Section~\ref{sec:mceliece}) utilizing the results of Part~1.
As in information set decoding, we are looking for $\isdp$ error-free positions of $\r$, where $\isdp \geq \kGC$.
We make use of the fact that inner blocks which were decoded correctly in Part~1 do not contain errors.
Thus, instead of finding $\isdp$ single error-free positions in $\r$, we simply try to find $\nblocks \geq \tfrac{\isdp}{\nB}$ inner blocks that were correctly decoded in Part~1 and thus obtain $\nblocks \cdot \nB \geq \isdp$ error-free positions.
Also, we can ignore blocks in which decoding failed.
These tricks reduce the overall complexity of the attack significantly.
The method is illustrated in Figure~\ref{fig:nonstruct_illustration}.
Here, the received word, which is an element of $(\Fq^\nB)^\nA$, is seen as an $\nA \times \nB$ matrix over $\Fq$, where each inner block corresponds to a row of the matrix.

\begin{figure}[h]
\tikzset{
    failure/.style={pattern=crosshatch, pattern color=blue},
    correct/.style={pattern=north east lines, pattern color=darkgreen},
    wrong/.style={pattern=north west lines, pattern color=darkred},
    empty/.style={}
}

\centering
\resizebox{\textwidth}{!}{
\centering
\begin{tikzpicture}
\def\height{0.3}
\def\width{2}
\def\xdist{4}
\def\xlegend{-4}
\def\ylegend{-0.5}
\def\xexplanation{2.5}

\node[right] at (-1.4*\xdist,5*\height) {$\r = \c + \e =$};

%--------------------------------------------------------------------------------------
% Block 2 (only errors)
%--------------------------------------------------------------------------------------
\draw[empty] 		(-\xdist+0,9*\height) rectangle (-\xdist+\width,10*\height);
\node			at	(-\xdist+0.3*\width,9.5*\height) {$\times$};
\draw[empty] 		(-\xdist+0,8*\height) rectangle (-\xdist+\width,9*\height);
\node			at	(-\xdist+0.5*\width,8.5*\height) {$\times$};
\node			at	(-\xdist+0.7*\width,8.5*\height) {$\times$};
\draw[empty] 		(-\xdist+0,7*\height) rectangle (-\xdist+\width,8*\height);
\draw[empty] 		(-\xdist+0,6*\height) rectangle (-\xdist+\width,7*\height);
\node			at	(-\xdist+0.2*\width,6.5*\height) {$\times$};
\node			at	(-\xdist+0.3*\width,6.5*\height) {$\times$};
\node			at	(-\xdist+0.6*\width,6.5*\height) {$\times$};
\node			at	(-\xdist+0.9*\width,6.5*\height) {$\times$};
%--------------------------------------------------------------------------------------
\draw[empty]		(-\xdist+0,3*\height) rectangle (-\xdist+\width,6*\height);
\node 			at  (-\xdist+0.5*\width,4.75*\height) {$\vdots$};
%--------------------------------------------------------------------------------------
\draw[empty] 		(-\xdist+0,2*\height) rectangle (-\xdist+\width,3*\height);
\node			at	(-\xdist+0.6*\width,2.5*\height) {$\times$};
\draw[empty] 		(-\xdist+0,\height) rectangle (-\xdist+\width,2*\height);
\node			at	(-\xdist+0.2*\width,1.5*\height) {$\times$};
\node			at	(-\xdist+0.4*\width,1.5*\height) {$\times$};
\node			at	(-\xdist+0.9*\width,1.5*\height) {$\times$};
\draw[empty] 		(-\xdist+0,0) rectangle (-\xdist+\width,\height);
\node			at	(-\xdist+0.3*\width,0.5*\height) {$\times$};
\node			at	(-\xdist+0.1*\width,0.5*\height) {$\times$};
%--------------------------------------------------------------------------------------

%--------------------------------------------------------------------------------------
% Decoding arrows
%--------------------------------------------------------------------------------------
\node[above,text width=2cm, align=center] at (-0.5*\xdist+0.5*\width,11*\height) {Decode inner blocks using Part~1};
\draw[->,>=latex] 	(-0.9*\xdist+0.9*\width,9.5*\height) -- (-0.1*\xdist+0.1*\width,9.5*\height);
\draw[->,>=latex] 	(-0.9*\xdist+0.9*\width,8.5*\height) -- (-0.1*\xdist+0.1*\width,8.5*\height);
\draw[->,>=latex] 	(-0.9*\xdist+0.9*\width,7.5*\height) -- (-0.1*\xdist+0.1*\width,7.5*\height);
\draw[->,>=latex] 	(-0.9*\xdist+0.9*\width,6.5*\height) -- (-0.1*\xdist+0.1*\width,6.5*\height);
\node 			at  (-0.5*\xdist+0.5*\width,4.75*\height) {$\vdots$};
\draw[->,>=latex] 	(-0.9*\xdist+0.9*\width,2.5*\height) -- (-0.1*\xdist+0.1*\width,2.5*\height);
\draw[->,>=latex] 	(-0.9*\xdist+0.9*\width,1.5*\height) -- (-0.1*\xdist+0.1*\width,1.5*\height);
\draw[->,>=latex] 	(-0.9*\xdist+0.9*\width,0.5*\height) -- (-0.1*\xdist+0.1*\width,0.5*\height);
%--------------------------------------------------------------------------------------

%--------------------------------------------------------------------------------------
% Block 2 (with colors)
%--------------------------------------------------------------------------------------
\draw[correct] 		(0,9*\height) rectangle (\width,10*\height);
\draw[failure] 		(0,8*\height) rectangle (\width,9*\height);
\draw[correct] 		(0,7*\height) rectangle (\width,8*\height);
\draw[wrong] 		(0,6*\height) rectangle (\width,7*\height);
%--------------------------------------------------------------------------------------
\draw[empty]		(0,3*\height) rectangle (\width,6*\height);
\node 			at  (0.5*\width,4.75*\height) {$\vdots$};
%--------------------------------------------------------------------------------------
\draw[correct] 		(0,2*\height) rectangle (\width,3*\height);
\draw[wrong] 		(0,\height) rectangle (\width,2*\height);
\draw[failure] 		(0,0) rectangle (\width,\height);
%--------------------------------------------------------------------------------------

%--------------------------------------------------------------------------------------
% Extract arrows
%--------------------------------------------------------------------------------------
\node[above,text width=2cm, align=center] at (\xdist-0.5*\xdist+0.5*\width,11*\height) {Extract inner blocks that did not fail};
\draw[->,>=latex] 	(\xdist-0.9*\xdist+0.9*\width,9.5*\height) -- (\xdist-0.1*\xdist+0.1*\width,9.5*\height);
\draw[->,>=latex] 	(\xdist-0.9*\xdist+0.9*\width,7.5*\height) -- (\xdist-0.1*\xdist+0.1*\width,7.5*\height);
\draw[->,>=latex] 	(\xdist-0.9*\xdist+0.9*\width,6.5*\height) -- (\xdist-0.1*\xdist+0.1*\width,6.5*\height);
\node 			at  (\xdist-0.5*\xdist+0.5*\width,4.75*\height) {$\vdots$};
\draw[->,>=latex] 	(\xdist-0.9*\xdist+0.9*\width,2.5*\height) -- (\xdist-0.1*\xdist+0.1*\width,2.5*\height);
\draw[->,>=latex] 	(\xdist-0.9*\xdist+0.9*\width,1.5*\height) -- (\xdist-0.1*\xdist+0.1*\width,1.5*\height);
%--------------------------------------------------------------------------------------

%--------------------------------------------------------------------------------------
% Block 3 (only not failed blocks)
%--------------------------------------------------------------------------------------
\draw[empty] 		(\xdist+0,9*\height) rectangle (\xdist+\width,10*\height);
\draw[empty] 		(\xdist+0,7*\height) rectangle (\xdist+\width,8*\height);
\draw[empty] 		(\xdist+0,6*\height) rectangle (\xdist+\width,7*\height);
\node 			at  (\xdist+0.5*\width,4.75*\height) {$\vdots$};
\draw[empty] 		(\xdist+0,2*\height) rectangle (\xdist+\width,3*\height);
\draw[empty] 		(\xdist+0,\height) rectangle (\xdist+\width,2*\height);
%--------------------------------------------------------------------------------------

%--------------------------------------------------------------------------------------
% Brace
%--------------------------------------------------------------------------------------
\draw[decorate,decoration=brace] (\xdist+1.2*\width,10*\height) -- (\xdist+1.2*\width,0);
\node[right,text width=2.5cm] at (\xdist+1.3*\width,5*\height) {Randomly choose $\nblocks$ out of these $\ncorrect+\nwrong$ inner blocks. Goal: Find $\tau$ correct ones.};
%--------------------------------------------------------------------------------------

%--------------------------------------------------------------------------------------
% Legend
%--------------------------------------------------------------------------------------

\node[right]		at	(-1.4*\xdist,\ylegend-0.5) {\textbf{Legend:}};
\node				at	(\xlegend+0.5*\width,\ylegend-0.5) {$\times$};
\node[right] 		at	(\xlegend+\xexplanation,\ylegend-0.5) {Single error in some inner block.};
\draw[correct] 		(\xlegend,\ylegend-1+0.5*\height) rectangle (\xlegend+\width,\ylegend-1-0.5*\height);
\node[right] 		at	(\xlegend+\xexplanation,\ylegend-1) {Correctly decoded inner block ($\ncorrect$ many)};
\draw[wrong] 		(\xlegend,\ylegend-1.5+0.5*\height) rectangle (\xlegend+\width,\ylegend-1.5-0.5*\height);
\node[right] 		at	(\xlegend+\xexplanation,\ylegend-1.5) {Wrongly decoded inner block ($\nwrong$ many)};
\draw[failure] 		(\xlegend,\ylegend-2+0.5*\height) rectangle (\xlegend+\width,\ylegend-2-0.5*\height);
\node[right] 		at	(\xlegend+\xexplanation,\ylegend-2) {Inner block in which decoding failed ($\nfail$ many)};
%--------------------------------------------------------------------------------------

\end{tikzpicture}

}
\caption{Illustration of non-structural attack.}
\label{fig:nonstruct_illustration}
\end{figure}
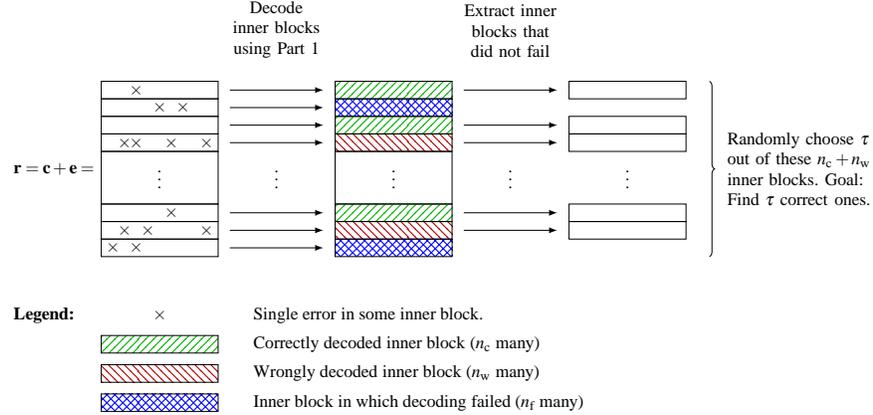

We denote by $\r_{\nblocks}$, $\e_{\nblocks}$ and $\Gpub_{\nblocks}$ the parts of $\r$, $\e$ and $\Gpub$ restricted to the columns corresponding to the $\nblocks$ chosen inner blocks.
If we find $\nblocks$ of the $\ncorrect$ correctly decoded blocks, the system
\begin{align*}
\r_{\nblocks} = \mhat \cdot \Gpub_{\nblocks} + \underset{=\Zmatrix}{\underbrace{\e_{\nblocks}}} = \mhat \cdot \Gpub_{\nblocks}
\end{align*}
has a solution $\mhat$.
If $\nblocks$ is chosen large enough, $\Gpub_{\nblocks}$ has full rank $\kOC$, the solution $\mhat$ is unique and fulfills $\dH(\mhat \cdot \Gpub, \r) < \tfrac{\dGC}{2}$.
For most practical codes, we conjecture that it is not necessary to choose $\nblocks$ much larger than $\tfrac{\isdp}{\nB} \approx \tfrac{\kB}{\nB}$.

The entire non-structural attack is summarized in Algorithm~\ref{alg:nonstruct}.

\printalgoIEEE{
\DontPrintSemicolon
\KwIn{$\r = m \cdot \Gpub + \e$ with $\wtH(\e)=t$, $\Pmatrix_{\mathrm{Step~1}}$ and $\GBi{i}$ for all $i=1,\dots,\nA$.}
\KwOut{$\m$}
Decode inner blocks of $\r$ as described in Part~1, using $\Pmatrix_{\mathrm{Step~1}}$ and $\GBi{i}$. \label{line:nonstruct_1} \; %\hfill \tcp{$O(Bananaaa)$} \label{line:onstruct_1}
\Do{$\nexists \mhat$ or $\dH(\mhat \cdot \Gpub, \r) \geq \tfrac{\dGC}{2}$}{
	Choose $\nblocks$ out of $\nA-\nfail$ inner blocks, in which decoding did not fail. \label{line:nonstruct_2} \; % \hfill \tcp{$O(Bananaaa)$} \label{line:onstruct_2}
	Solve $\r_{\nblocks} = \mhat \cdot \Gpub_{\nblocks}$ for $\mhat$. \; \label{line:nonstruct_3} % \hfill \tcp{$O(Bananaaa)$} \label{line:onstruct_2}
}
\Return{$\mhat$}
\caption{Non-Structural Attack}
\label{alg:nonstruct}
}

\begin{theorem}\label{thm:nonstruct_correctness}
If $t < \tfrac{\dGC}{2}$, Algorithm~\ref{alg:nonstruct} is correct with high probability.
\end{theorem}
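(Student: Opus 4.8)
The plan is to split ``correct with high probability'' into a deterministic part and a probabilistic part. The deterministic part is that whenever Algorithm~\ref{alg:nonstruct} returns a value, that value is necessarily $\m$; the probabilistic part is that the \textbf{Do}-loop terminates with high probability.

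First I would dispose of the output correctness exactly as in the proof of Theorem~\ref{thm:isd}. Since $\wtH(\e)=t<\tfrac{\dGC}{2}$, the codeword $\c=\m\cdot\Gpub$ satisfies $\dH(\c,\r)=t<\tfrac{\dGC}{2}$, and standard coding theory guarantees that $\c$ is the \emph{unique} codeword of the code generated by $\Gpub$ within Hamming distance $\tfrac{\dGC}{2}$ of $\r$. The loop exits only when a candidate $\mhat$ with $\dH(\mhat\cdot\Gpub,\r)<\tfrac{\dGC}{2}$ has been found, which forces $\mhat\cdot\Gpub=\c=\m\cdot\Gpub$; as $\Gpub$ has full row rank $\kGC$, this yields $\mhat=\m$. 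Hence no wrong message can ever be returned, and the only thing left to control is termination.

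For termination I would first bound the blocks lost in Part~1. By the bounded-minimum-distance property established there, a block is decoded incorrectly or fails only if it carries at least $\tfrac{\dB}{2}$ errors, so $\nwrong+\nfail\le\tfrac{t}{\dB/2}=\tfrac{2t}{\dB}$ and therefore $\ncorrect\ge\nA-\tfrac{2t}{\dB}$. For the parameter regimes of interest, where $\nblocks$ need only slightly exceed $\tfrac{\kGC}{\nB}$, this leaves $\ncorrect\ge\nblocks$, so the pool of correctly decoded blocks is large enough. Each pass of the loop draws $\nblocks$ of the $\nA-\nfail$ non-failed blocks uniformly at random, and the probability of hitting only correctly decoded ones is $\binom{\ncorrect}{\nblocks}/\binom{\nA-\nfail}{\nblocks}>0$. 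On such a draw the selected positions are error-free, so $\e_{\nblocks}=\Zmatrix$ and $\m$ solves $\r_{\nblocks}=\mhat\cdot\Gpub_{\nblocks}$; if moreover $\Gpub_{\nblocks}$ has rank $\kGC$, this solution is unique and the termination test succeeds. Since the draws are independent and each succeeds with positive probability, the Borel--Cantelli argument used for Theorem~\ref{thm:isd} gives termination with probability~$1$.

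The step I expect to be the main obstacle is the rank condition on $\Gpub_{\nblocks}$. Just as for information set decoding, outside the MDS case one cannot guarantee that a correctly decoded set of $\nblocks$ blocks produces a rank-$\kGC$ submatrix, and this is precisely where the qualifier ``with high probability'' enters the statement. I would handle it by invoking the conjecture recorded just before the theorem, namely that for practically good codes a value of $\nblocks$ slightly above $\tfrac{\kGC}{\nB}$ already yields full rank with overwhelming probability; the companion requirement $\ncorrect\ge\nblocks$ can fail only when the $t$ errors concentrate in atypically few blocks, an event whose probability is negligible for the envisaged parameters.
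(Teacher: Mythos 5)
Your proposal is correct and follows essentially the same route as the paper's proof: Part~1 guarantees enough correctly decoded (error-free) blocks, each loop iteration succeeds with positive probability so termination follows, and uniqueness of the codeword within $\tfrac{\dGC}{2}$ of $\r$ forces $\mhat=\m$, with the rank of $\Gpub_{\nblocks}$ as the residual ``high probability'' caveat. You are in fact somewhat more careful than the paper, which only asserts $\nblocks\le\nA-\nfail$ ``with high probability'' where your deterministic bound $\nwrong+\nfail\le\tfrac{2t}{\dB}$ and your explicit soundness argument (no wrong $\mhat$ can pass the exit test) make the same skeleton more precise.
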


\begin{proof}
Line~\ref{line:nonstruct_1} corresponds to Part~1 of the non-structural attack.
Its correctness follows from the arguments in Section~\ref{subsec:part1}.
Due to $t < \tfrac{\dGC}{2}$, $\tau \leq \nA-\nfail$ with high probability.
Thus, Line~\ref{line:nonstruct_2} finds $\nblocks$ correct blocks with non-zero probability in any loop and hence, it must find them in finite time with probability~$1$.
When $\nblocks$ correct blocks are found and $\nblocks$ is chosen large enough, the system $\r_{\nblocks} = \mhat \cdot \Gpub_{\nblocks}$ has a unique solution, again with high probability.
By coding theoretic arguments, it holds that $\m=\mhat$ and $\c = \mhat \cdot \Gpub$.
Thus, $\dH(\mhat \cdot \Gpub, \r) = \wtH(\e) = t < \tfrac{\dGC}{2}$ and the algorithm terminates.
\end{proof}

\subsubsection{Complexity of the Non-Structural Attack}
\label{subsubsec:ComplexityNonstructAttack}

In general, the work factor of the non-structural attack is the sum of the work factors of the two parts:
\begin{align*}
W = W_1 + W_2
\end{align*}

Assume that in the first part, the decoding was done using the information set decoding attack.
Thus, we have to apply $\nA$ many small attacks, each of work factor
\begin{align*}
\frac{\kB^3 \cdot {\nB \choose \kB}}{{n-\tB \choose \kB}},
\end{align*}
where $\tB = \lfloor \tfrac{\dB-1}{2} \rfloor$ is half the minimum distance of $\Bcode$. Thus,
\begin{align*}
W_1 = \nA \cdot \frac{\kB^3 \cdot {\nB \choose \kB}}{{n-\tB \choose \kB}}
\end{align*}

In the second part, the probability of choosing a subset of $\nblocks$ correctly decoded inner blocks is
\begin{align*}
p = \frac{{\ncorrect \choose \nblocks}}{{\ncorrect+\nwrong \choose \nblocks}}.
\end{align*}
Solving the system of linear equations can be done in $\kGC^3$ operations, yielding an expected work factor of
\begin{align*}
W_2 = \frac{\kGC^3}{p} = \kGC^3 \cdot \frac{{\ncorrect+\nwrong \choose \nblocks}}{{\ncorrect \choose \nblocks}}.
\end{align*}

In Appendix~\ref{app:WorkFactorNonStruct}, it is shown that for the parameters proposed for an OC code construction by \cite{sendrier1995structure}, we obtain a work factor of
\begin{align*}
W \approx 2^{29.7},
\end{align*}
which is considered to be insecure \cite{heyse2013}.
We conclude that we have found a non-structural attack whose work factor is significantly reduced compared to a naive structural attack on $\CGC$ directly.
Thus, parameters of a GCC or OCC construction must be chosen much larger than non-concatenated codes in order to compensate the security level.
This increases the size of the public key considerably and probably implies that GC codes are not practically relevant to the McEliece cryptosystem, which already struggles with the disadvantage of large key sizes.

\section{Methods of Preventing Attacks}
\label{sec:preventing}

In the previous sections, we saw that Sendrier's attack for OC codes is partially applicable to GC codes.
Also, we were able to give a non-structural attack which is efficient for practical GC codes.
In this section, we present methods for preventing parts of these attacks.

\subsection{Preventing the Second Step of Sendrier's Attack}

Sendrier's second step tries to synchronize the permutations of the inner blocks.
As already mentioned in Section~\ref{subsec:sendrier2}, this method only works if the permutation group of the code $\Bcode$ is reduced to the identity element.
Thus, one possibility would be to choose $\Bcode$ with a non-trivial permutation group.
However, it is already mentioned in \cite{sendrier1998concatenated} that such codes yield bad OC codes, implying that also GC codes would not be good.

Another possibility would be to change the definition of OC or GC codes such that we use different codes in each inner block.
This corresponds to having several mappings
\begin{align*}
\theta_i : \bigoplus_{i=1}^{\imax} \to \Bcode_i
\end{align*}
with $i=1,\dots,\nA$ and $\Bcode_i(q; \nB,\kB_i,\dB_i)$ pairwise distinct, such that
\begin{align*}
\Theta \, : \, \bigoplus\limits_{i=1}^{\ell} (\Fqm{\ki{i}})^\nA \, &\to \, \bigoplus_{i=1}^{\nA} \Bcode_i \\
\left(
\begin{bmatrix} a_{1,1} \\ a_{1,2} \\ \vdots \\ a_{1,\nA} \end{bmatrix},
\begin{bmatrix} a_{2,1} \\ a_{2,2} \\ \vdots \\ a_{2,\nA} \end{bmatrix},
\dots,
\begin{bmatrix} a_{\imax,1} \\ a_{\imax,2} \\ \vdots \\ a_{\imax,\nA} \end{bmatrix}
\right)
\, &\mapsto \,
\begin{bmatrix}\theta_1(a_{1,1}, \dots, a_{\imax,1}) \\ \theta_2(a_{1,2}, \dots, a_{\imax,2}) \\ \vdots \\ \theta_\nA(a_{1,\nA}, \dots, a_{\imax,\nA}) \end{bmatrix}
\end{align*}
in the definition of OC or GC codes.
This construction is similar to the one used to define Justesen codes \cite{justesen1972class}, which are certain OC codes with different inner codes.
If the codes $\Bcode_i$ have pairwise different $j$-th signatures (cf. Section~\ref{subsec:sendrier2}) for all $j=1,\dots,\nB$, Step~2 of Sendrier's attack does not work for either modified OC or modified GC codes.
However, it can easily be seen that the alternative method described in Section~\ref{subsec:alt3.1} still works for different inner codes and thus, also the non-structural attack can be applied in this case.

\subsection{Preventing the First Step of Sendrier's Attack}

Any attack described in this paper relies on the success of the first step of Sendrier's attack.
Therefore, it is an important question whether we can find a large sub-class of GC codes which are resistant against this part of the attack.

The necessary condition for this method to work is that the inner code $\Bcode$ is not the union of two disjoint support codewords.
Sendrier \cite{sendrier1998concatenated} already mentioned that codes violating this condition are rather bad code.
Also, if e.g. the inner code was exactly the union of $r$ disjoint support codes which cannot be further splitted, the attack would give us $r \cdot \nA$ connected disjoint subsets of the code positions.
We thus need to try the subsequent parts of the attack for all combinations of $r$ subsets grouped to an inner block each.
For small $r$ and $\nA$, the number of possibilities might still be small enough to not increase the overall work factor much.

As proven in Section~\ref{subsec:sendrier1}, a sufficient condition that Sendrier's attack works is that the set
\begin{align*}
\Xi := \{\c \in \Pcode(\CGC\dual) : \wtH(\c) < \boundDdual \}
\end{align*}
is not empty.
Every $\c \in \Xi$ is in $\CGC\dual$ and thus, $\wtH(\c) \geq \dGC\dual \geq \dB\dual$.
Therefore, it follows that
\begin{align*}
\Xi \neq \emptyset \quad \Rightarrow \quad \dB\dual < \boundDdual.
\end{align*}
Hence, if any of the outer codes $\Acodei{i}$ has dual distance $\dAi{i}\dual \leq \dB\dual$, $\Xi = \emptyset$ and Sendrier's first step is not guaranteed to work.
It needs to be mentioned that $\Xi \neq \emptyset$ is a sufficient condition and someone might find a modification of the first step that can handle the case $\Xi = \emptyset$.
This problem needs further investigation.

In the OC case, if $\dA\dual$ is decreased, the dimension $\kA$ is also decreased and thus, the OC code might become bad.
The advantage of GC codes is that only one of the outer codes needs to have this property and we can still obtain a good GC code satisfying $\Xi=\emptyset$.
This fact makes us believe that there is the possibility of a large subclass of practically relevant GC codes that resist the first step of Sendrier's attack.

\section{Conclusion}
\label{sec:conclusion}

In this work we studied the suitability of generalized concatenated codes in the McEliece cryptosystem, motivated by the advantage of faster decoding than codes without concatenated structure.
First, we gave a partial classification of GC codes that cannot be described as OC codes, for which a complete structural attack is known \cite{sendrier1998concatenated}.
We analyzed Sendrier's structural attack on OC codes for applicability in the GC case.
Step~1 of this attack can be directly applied, however with a stricter sufficient condition.
Steps~2 and 3.1 were proven to work in exactly the same cases as for OC codes.
However, it remains an open problem whether Step~3.2 of Sendrier's attack can be modified to work with GC codes.

We further gave an alternative method of obtaining the result of Step~3.1, only requiring the output of Step~1 of Sendrier's attack.
In contrast to Step~2, this method works for all outer codes and can be performed in polynomial time.
We were able to improve the complexity of the information set decoding attack significantly, using the result of Step~1.
This gives us a non-structural attack which we showed to be efficient for code parameters similar to the original McEliece Goppa codes construction.
Hence, we can conclude that if Sendrier's first step works, this non-structural method forces code parameters to be chosen so large that key sizes become impractical compared to other code constructions.
Figure~\ref{fig:summary} summarizes the attacks discussed in the paper.

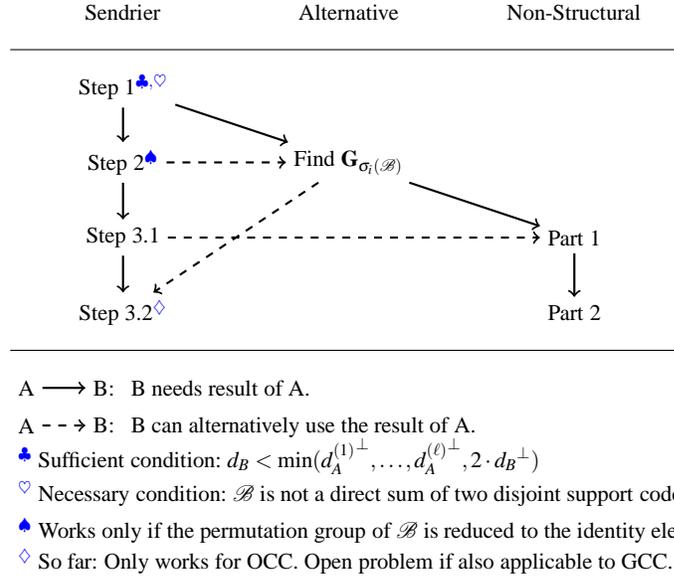
\begin{figure}
\centering
\begin{tikzpicture}
\def\y{1}
\def\x{3}
\def\Lx{-1.5}
\def\Ly{-5}

\draw (-\x/2,-\y/2) -- (2.5*\x,-\y/2);

\node (S0) at (0,0) {Sendrier};
\node (S1) at (0,-\y) {Step 1\textcolor{legendcolor}{$^{\clubsuit,\heartsuit}$}};
\node (S2) at (0,-2*\y) {Step 2\textcolor{legendcolor}{$^\spadesuit$}};
\node (S31) at (0,-3*\y) {Step 3.1};
\node (S32) at (0,-4*\y) {Step 3.2\textcolor{legendcolor}{$^\diamondsuit$}};

\node (G0) at (\x,0) {Alternative};
\node (G1) at (\x,-2*\y) {Find $\vec{G}_{\sigma_i(\Bcode)}$};

\node (N0) at (2*\x,0) {Non-Structural};
\node (N1) at (2*\x,-3*\y) {Part~1};
\node (N2) at (2*\x,-4*\y) {Part~2};

\draw[->, thick] (S1) -- (G1);
\draw[->, thick, dashed] (S2) -- (G1);
\draw[->, thick] (G1) -- (N1);
\draw[->, thick, dashed] (S31) -- (N1);
\draw[->, thick] (N1) -- (N2);
\draw[->, thick] (S1) -- (S2);
\draw[->, thick] (S2) -- (S31);
\draw[->, thick] (S31) -- (S32);
\draw[->, thick, dashed] (G1) -- (S32);

\draw (-\x/2,-4.5*\y) -- (2.5*\x,-4.5*\y);

\node[right] (L01) at (\Lx,\Ly) {A};
\node[right] (L02) at (\Lx+1,\Ly) {B:};
\draw[->, thick] (L01) -- (L02);
\node[right] (L03) at (\Lx+1.5,\Ly) {B needs result of A.};

\node[right] (L04) at (\Lx,\Ly-0.5) {A};
\node[right] (L05) at (\Lx+1,\Ly-0.5) {B:};
\draw[->, thick, dashed] (L04) -- (L05);
\node[right] (L06) at (\Lx+1.5,\Ly-0.52) {B can alternatively use the result of A.};

\node[right] (L1) at (\Lx,\Ly-0.9) {\textcolor{legendcolor}{$^\clubsuit$} Sufficient condition: $\dB < \boundDdual$};
\node[right] (L1) at (\Lx,\Ly-0.9-0.5) {\textcolor{legendcolor}{$^\heartsuit$} Necessary condition: $\Bcode$ is not a direct sum of two disjoint support codes};
\node[right] (L2) at (\Lx,\Ly-0.9-1) {\textcolor{legendcolor}{$^\spadesuit$} Works only if the permutation group of $\Bcode$ is reduced to the identity element.};
\node[right] (L3) at (\Lx,\Ly-0.9-1.4) {\textcolor{legendcolor}{$^\diamondsuit$} So far: Only works for OCC. Open problem if also applicable to GCC.};
\end{tikzpicture}
\caption{Summary of Attacks.}
\label{fig:summary}
\end{figure}

We proposed several methods which have the potential to prevent parts of Sendrier's attack, especially Step~1, and which only work in the GC case.
This fact shows that GC codes, in contrast to OC codes, are still candidates for the use in the McEliece cryptosystem.
It needs to be studied whether the methods of preventing Sendrier's first step cannot be circumvented by any efficient method.
Other open problems are finding a necessary condition for Sendrier's first step to work.
Also, Step~3.2 requires further studies in order to give a complete structural attack on GC codes.

\section{Appendix}
\addcontentsline{toc}{section}{Appendix}

\subsection{GCC Construction and Decoding}
\label{app:GCC}

Generalized concatenated (GC) codes were introduced by \cite{blokh1974}.
This section presents construction and decoding of GC codes according to \cite[Chapter~9]{bossert1999}. 
Code concatenation is used in order to obtain long codes with low decoding complexity. 
The advantage of a GC code in comparison to an OC code with same length and dimension is, that the GC code can correct more errors.
A GC code consists of one inner and several outer codes of different dimensions. 
If we only use one outer code, we obtain an OC code.

% Partitioning
The idea of generalized code concatenation is to partition the inner code into several levels of
subcodes. 
We generate a partition tree as follows. 
The inner code becomes the root of the tree. 
We partition the inner code into subcodes which form the second level of the tree.
We again partition each of the subcodes and continue until we end up at a level in which each subcode consists of only one codeword. These subcodes become the leaves of the tree. 
Let $\Bcode_i^{(j)}\big{(}q;\nB,\kB_i^{(j)},\dB_i^{(j)}\big{)}$ denote the inner codes at level $j$.
The partitioning should be done such that the minimum distance of the subcodes increases strictly monotonically from level to level in the partition tree.
Each codeword can be uniquely identified by enumerating the branches of the partition tree and following this enumeration from the root to the corresponding leaf.
The numeration from level $j$ to level $j+1$ is protected by an outer code $\Acode^{(j)}\big{(}q^\ki{j};\nA,\kAi{j},\dAi{j}\big{)}$.
This encoding scheme matches the definition of GC codes in Section~\ref{subsec:gcc} by simply taking $\theta$ as the function that maps the enumeration of a codeword from the root to a leaf to the codeword of $\Bcode$ which is contained in this leaf.
Note, that for many linear codes $\Bcode$, there is a partitioning which corresponds to an $\Fq$-linear mapping $\theta$, cf.~\cite{bossert1999}.
Also, most practically good GC codes fulfill $\ki{1}=\ki{2}=\dots=\ki{\imax}=1$ due to the existence of many linear subcodes of $\Bcode$ (e.g. Reed--Muller codes), which helps constructing many partitioning.
An example of the encoding and transmission process is visualized in \cite[Figure~9.10]{bossert1999}.

To obtain a good GC code, the dimensions of the outer codes have to be different. 
Also, the minimum distances of the outer codes should decrease from level to level. 
Keeping the product $\dAi{j} \cdot \dB_i^{(j)}$ for all $i,j$ roughly constant also leads to good properties.
The latter follows from a decoding procedure that reduces the problem of decoding GC codes to a sequence of $\imax$ decoders of OC codes with minimum distances $\dAi{j} \cdot \dB_{i_j}^{(j)}$ for some sequence of $i_j$'s for all $j=1,\dots,\imax$.
We refer to the example presented in \cite[Figure~9.11]{bossert1999}.
The length of the constructed GC code is $\nGC = \nA \cdot \nB$, the dimension is $k = \sum_{i=1}^{\imax} \kAi{i}$, and the minimum distance is lower bounded by $\dGC \geq \min_{i,j}\big{(}\dAi{j} \cdot \dB_i^{(j)}\big{)}$.

\subsection{Work Factor of Non-Structual Attack on Code Ex. in \cite{sendrier1995structure}}
\label{app:WorkFactorNonStruct}

In~\cite{sendrier1995structure}, Sendrier uses an OC code of parameters $(2048, 308, \geq 425)$. The inner code is a random code $\mathcal{B}(16, 7, 5)$ over $\F_{2}$ and the outer code is a GRS code $\mathcal{A}(128, 44, 85)$ over $\F_{2^7}$. A simulation was performed using Matlab on $1500$ random codes $(\mathcal{B}(16, 7, 5))$ by adding errors with a probability of $\frac{212}{2048}$ to each codeword of $\mathcal{B}$ and then decoding it. $1000000$ codewords for each code were used. The estimations for the probabilities of correct decoding, wrong decoding and failure in decoding are $p_\mathsf{c} = 0.7741$, $p_\mathsf{w} = 0.0441 $ and $p_\mathsf{f} = 0.1818$, respectively. The corresponding standard deviation values are $0.00042, 0.0043 $ and $0.0043$.
The expected number of correctly and wrongly decoded, and failed inner blocks are then given by
\begin{eqnarray*}
\ncorrect &= \nA \cdot p_\mathsf{c} = 128 \cdot 0.7741 \approx& 99, \\
\nwrong   &= \nA \cdot p_\mathsf{w} = 128 \cdot 0.0441 \approx& 6, \\
\nfail    &= \nA \cdot p_\mathsf{f} = 128 \cdot 0.1818 \approx& 23.
\end{eqnarray*}
By choosing $m=\kA=44$ inner blocks, we obtain the work factor
\begin{equation*}
W_2 = \frac{308^3}{p} \approx \frac{308^3}{0.0345} \approx 8.4686 \cdot 10^{8} \approx 2^{29.7}.
\end{equation*}
With
\begin{equation*}
W_1 = 128 \cdot \frac{7^3\cdot {16 \choose 7 }}{{16-{\lfloor\frac{5-1}{2}\rfloor}\choose 7}}\approx 1.4635 \cdot 10^5, 
\end{equation*} 
$W_1 \ll W_2$, and the overall work factor is then equal to
\begin{equation*}
W \approx 2^{29.7}.
\end{equation*}
This work factor is considered to be insecure, cf. \cite{heyse2013}.

\bibliographystyle{alpha}
\bibliography{main}

\end{document}